\newtheorem{Lemma}{Lemma}
\newtheorem{proposition}{Proposition}
\renewcommand{\eqref}[1]{(\ref{#1})}
\DeclareMathOperator{\E}{\mathbb{E}}
\begin{document}

\title{ Analysis of Statistical QoS in Half Duplex and Full Duplex Dense Heterogeneous Cellular Networks}

\author{Alireza Sadeghi,~\IEEEmembership{Student Member, IEEE,}
        Michele Luvisotto,~\IEEEmembership{Student Member, IEEE,}
        Farshad Lahouti,~\IEEEmembership{Senior Member, IEEE,}
        Michele Zorzi~\IEEEmembership{Fellow, IEEE,}%
\IEEEcompsocitemizethanks{
	\IEEEcompsocthanksitem  A. Sadeghi is with the Electrical and Computer 
	Engineering Department, University of Minnesota, 200 Union Street SE, 4-174 
	Keller Hall, Minneapolis, MN 55455-0170. E-mail: sadeg012@umn.edu
	\IEEEcompsocthanksitem  M. Luvisotto and M. Zorzi are with the Department of 
	Information Engineering, University of Padova, Via Gradenigo 6/B, 35131 Padova,
	Italy. E-mail: {luvisott, zorzi}@dei.unipd.it
	\IEEEcompsocthanksitem  F. Lahouti is with the Electrical Engineering 
	Department, California Institute of Technology, 1200 E California Blvd, MC 
	136-93, Pasadena, CA 91125. E-mail: lahouti@caltech.edu
	}
\thanks{Part of this work has appeared on the Proceedings of the 2016 IEEE 
	International Conference of Communications (ICC) \cite{ICC_Stat_QoS}.}}

\maketitle

\begin{abstract}
Statistical QoS provisioning as an important performance metric in analyzing next generation mobile cellular network, aka 5G, is investigated. In this context, by quantifying the performance in terms of the effective capacity, we introduce a lower bound for the system performance that facilitates an efficient analysis. Based on the proposed lower bound, which is mainly built on a per resource block analysis, we build a basic mathematical framework to analyze effective capacity in an ultra dense heterogeneous cellular network. We use our proposed scalable approach to give insights about the possible enhancements of the statistical QoS experienced by the end users if heterogeneous cellular networks  migrate from a conventional half duplex to an imperfect full duplex mode of operation. Numerical results and analysis are provided, where the network is modeled as a Mat\'ern point process. The results demonstrate the accuracy and computational efficiency of the proposed scheme, especially in large scale wireless systems. Moreover, the minimum level of self interference cancellation for the full duplex system to start outperforming its half duplex counterpart is investigated. 
\end{abstract}

\begin{IEEEkeywords}
Full Duplex, Heterogeneous Cellular Network, Statistical QoS, 5G.
\end{IEEEkeywords}

\vspace{+1 cm}

\IEEEraisesectionheading{\section{Introduction}\label{sec:introduction}}
\IEEEPARstart{T}{he} ever increasing demand for mobile data traffic continues with the advent of smart phones, tablets, mobile 
routers, and cellular M2M devices. This is accompanied by user
behavioral changes from web browsing towards video streaming, social 
networking, and online gaming with distinct quality of service (QoS) requirements \cite{ericsson2014}. To handle this evolution, researchers are examining different enabling technologies for 5G, including mmWave communications for wider bandwidth, extreme densification of the network via low power base stations, large-scale antenna systems (known as massive MIMO), and wireless full duplex communications~\cite{andrews20145G, InbandSabharwal}. 

The new cellular architecture known as Heterogeneous Cellular Networks (HCNs) 
refers to a scenario in which a macro cell is overlaid by 
heterogeneous low--power base stations (BSs). 
Such low power BSs have small coverage areas and are characterized by 
their own transmit power and named accordingly as micro, pico and femto cells. 
They are used to increase the capacity of the network while 
eliminating coverage holes \cite{HetNetfromTheorytoPractic}. In terms of modeling and analyzing HCNs, stochastic geometry has been vastly used in recent years \cite{StochGeoEk}. For instance Poisson point processes (PPPs) have provided tractable, simple, and accurate expressions for coverage probabilities and mean transmission rates \cite{StochGeoEk, DhillonkHetNet, AndrewsPPP}. However, in a real cellular network, the adoption of a simple PPP to model the location of the BSs does not capture an important characteristic of a real network deployment. Spectrum access policies, network planning, and the MAC layer impose constraints on the minimum distance between any two BSs or UEs in the network that are operating in the same resource blocks (RBs)~\cite{StochGeoEk}. According to these considerations, a repulsive point 
process with a minimum distance such as the Mat\'ern hard core point process 
(HCPP), despite its higher complexity, represents a better candidate to model a HCN 
compared to a simple PPP~\cite{StochGeoEk}, \cite{DengZH14}.

The Full Duplex (FD) radio technology can enhance spectrum 
efficiency by enabling simultaneous transmission and reception in the same 
frequency band at the same time. This new emerging technology has the potential 
to double the physical layer capacity \cite{InbandSabharwal} and enhance 
the performance even more, when higher layer protocols are redesigned accordingly~\cite{luvisotto2016rcfd}.

Due to the hurdles of canceling self--interference (SI) in FD devices via 
active and passive suppression mechanisms, FD operations are more reliable in 
low power wireless nodes. For instance in \cite{bharadia2013full} and 
\cite{bharadia2014full} the authors have implemented a FD WiFi radio operating 
in an unlicensed frequency band with 20~dBm transmit power while the same trend 
is followed in other works like \cite{DuarteExperimentalCharach}, where the 
maximum transmit power is 15~dBm. All these implementations suggest FD 
technology as a reasonable candidate to be employed in the low power BSs deployed within 
HCNs. 
Moreover, the increased spectral efficiency of the FD systems, combined with 
that of HCNs, provides another serious motivation in attempting to analyze a completely 
FD HCN. 

From another perspective, next generation mobile networks (5G) will aim not 
only to increase the network capacity but also to enhance several other 
performance metrics, including lower latency, seamless connectivity, and increased mobility \cite{andrews20145G}. These requirements can be generally 
regarded as improvements in the QoS experienced by 
the network entities. According to a forecast by Ericsson, 
approximately 55 percent of all the mobile data traffic in 2020 will account 
for mobile video traffic while another 15 percent will account for social 
networking~\cite{ericsson2014}. These multimedia services require a
bounded delay. The delay requirements of time sensitive services in 5G will greatly vary from milliseconds to a second~\cite{ZhangHetQoS}. 
Consequently, the analysis of statistical QoS in HCNs will become extremely important in 
the near future. 

The objective of this paper is to analyze and compare FD and HD HCNs 
in provisioning statistical QoS guarantees to the users in the network. In this regard, we model the HCN as a Mat\'ern HCPP, and statistically assess the QoS in terms of Effective Capacity (EC) or the maximum throughput under a delay constraint \cite{DapengQoS}. We provide insights on possible improvements in the QoS experience of end users if the current architecture 
migrates from conventional HD to FD. 
We propose a lower bound for the EC which greatly 
reduces the complexity of the analysis, while tightly approximating the system performance, 
especially in very large scale systems. The presented analysis is validated with simulation results.

The rest of the paper is organized as follows. A brief review of FD, 
statistical QoS provisioning, and stochastic geometry is provided in 
Section~\ref{Preliminaries}. The system model is 
described in Section~\ref{System Model}. 
Section~\ref{sec:theoretical} presents the proposed lower bound for the system 
performance and the corresponding theoretical analysis, whose results are validated 
through simulations in Section~\ref{sec:simulations}. Some discussion on user distributions in the network, coverage areas of the small cells, and channel models is provided in Section~\ref{Discussions}.  Finally, 
Section~\ref{sec:conclusions} concludes the paper.

\section{Preliminaries}
\label{Preliminaries}

\subsection{In-band Full Duplex Wireless Communications}
In--band full--duplex (IBFD) devices are capable of transmitting and receiving 
data in the same frequency band at the same time. In traditional wireless 
terminals, the self interference (SI) power is much larger than that of the received intended 
signal, making any reception infeasible while a transmission is 
ongoing. To overcome this issue, FD terminals are equipped with active and 
passive cancellation mechanisms to suppress their own SI in the received signal \cite{InbandSabharwal}. 
However, in practice, because of the many imperfections in transceiver characteristics and operations, namely phase noise, nonlinearities, I/Q imbalances, receiver quantization noise and constraints on the accuracy of the SI channel estimation, 
full cancellation of the SI signal is not possible. Therefore, some residual 
self--interference (RSI) always remains after all cancellation steps, and
results in a degraded system performance \cite{InbandSabharwal}, \cite{bharadia2013full}, \cite{bharadia2014full}, \cite{DuarteExperimentalCharach}. 

The RSI signal represents the main obstacle for a perfect FD communication and, 
similar to noise, is essentially uncorrelated with the original transmitted signal. Among different methods to model RSI, we consider RSI as a zero mean complex Gaussian random variable whose variance is a function of the transmitted power \cite{JSACRodriguez}, \cite{RamirezOptimal}. We have
\begin{equation}
\textbf{RSI} \sim \mathbb{CN} \left( {0,\eta {P^\kappa }} \right)
\label{eq}
\end{equation}
where $P$ is the transmit power, and $\eta$ and $\kappa$, $0 \le \eta,\kappa  \le 1$, are respectively the linear and the nonlinear SI cancellation parameters capturing the SI cancellation performance.
When no SI cancellation is performed $\eta , \kappa=1$, while $\eta=0$ 
represents the ideal case of perfect SI cancellation.

\subsection{Statistical QoS guarantees}
\label{subsec:qos_guarantees}

Real time multimedia services like video streaming require bounded delays. In 
this context, a received packet that violates its delay bound requirement is 
considered useless and is discarded. 
Due to the wireless nature of the access links in a mobile cellular network, 
providing deterministic delay bound guarantees is impossible. 
Thus, effective capacity defined as the maximum throughput under a given delay 
constraint, has been used to analyze multimedia wireless systems 
\cite{DapengQoS}. 
Any active entity in a cellular network can be regarded as a queueing system: 
it 
generates packets according to an arrival process, stores them in a queue and 
transmits them according to a service process.
For stationary and ergodic arrival and service processes, the 
queue length process $Q(t)$ converges, in distribution, to a random variable $Q(\infty)$ as follows
\begin{equation}
- \mathop {\lim }\limits_{{Q_{th}} \to \infty } \frac{{\log \left( {\Pr \left\{ {Q(\infty ) > {Q_{th}}} \right\}} \right)}}{{{Q_{th}}}} = \theta,
\label{QoS Definition}
\end{equation}
 where $Q_{th}$ is the queue length threshold and $\theta$ is a constant value called QoS exponent. In fact \eqref{QoS Definition} states that  the probability that the queue length process violates this threshold decays exponentially fast as the threshold increases, i.e., $\Pr \left\{ {Q(t) > Q_{th}} \right\} \sim {e^{ - \theta Q_{th}}}$ as $Q_{th} \rightarrow \infty $ \cite{Cheng}. In addition, it was shown in \cite{DapengQoS} that for small values of $Q_{th}$, the violation probability can be approximated more accurately as $\Pr \left\{ {Q(t) > Q_{th}} \right\} \approx \delta {e^{ - \theta Q_{th}}}$, where $\delta$ represents the probability that the buffer is not empty, and the QoS exponent, $\theta$, characterizes the decaying rate of the QoS violation probability. A large value for $\theta$ corresponds to a faster decaying rate, which means that the system has stringent statistical delay constraints. For instance, when $\theta \rightarrow \infty$ the system is very sensitive to delay and generally no delay is tolerable. On the other hand, for small values of $\theta$ the system can tolerate the delays and, when $\theta \rightarrow 0$ there is no sensitivity at all to delays in the system, and in this case the EC tends to the Shannon capacity \cite{ZhangQoS}.  

Let us define the service provided by the channel until time 
slot $t$ as 

\begin{equation}
C\left( {0,t} \right) = \sum\limits_{k = 1}^t {R\left[ k \right]}
\label{eq3}
\end{equation}
where $R[k]$ denotes the bits served in time slot $k$. 
The effective capacity of the channel is defined as \cite{DapengQoS}

\begin{equation}
{{EC}}\left( \theta  \right) =  - \frac{{{\Lambda _C}\left( { - \theta } \right)}}{\theta }
\label{eq4}
\end{equation}
where 
$\Lambda_C\left(-\theta\right)=\lim\limits_{t\to\infty}\frac{1}{t}\log E\left\{
e^{-\theta C(0,t)}\right\}$
is the G\"artner - Ellis limit of the service process $C(0,t)$. 
If the instantaneous service process, $R[k]$, is independent over time, 
EC can be simplified to 

\begin{equation}
{{EC}}\left( \theta  \right) =  - \frac{1}{\theta }\log {E \left \{ {{e^{ - 
				\theta R[k]}}} \right \}}.
\label{eq5}
\end{equation}

EC is defined as the maximum arrival rate a given service process can support in order to guarantee a QoS requirement specified by $\theta$ \cite{ZhangQoS}. We note that EC is defined for the service process and not for the arrival process. However, the dual definition of EC for the arrival process, referred to as the effective bandwidth, serves this purpose. It indicates the minimum service process required by a given arrival process for which the QoS exponent $\theta$ is satisfied \cite{ZhangQoS}.  

\section{System Model}
\label{System Model}
We consider a HCN consisting of a macro cell, overlaid by different tiers of small cells, each with its own 
characteristics including transmit power, path loss exponent, and coverage 
range. 
Each tier is assumed to have a circular coverage area provided by an omni-directional antenna to serve any user within its coverage range. The small cells are assumed to use out of band resources 
like fiber optics, wire, or microwave links for backhauling. Fig.~\ref{fig1} depicts an example of such a HCN.  
When the system is HD, a conventional frequency-division-duplexed (FDD) HCN  is assumed, which separates up-link (UL) and down-link (DL) transmissions, while in the FD case we 
consider a completely FD HCN where all the network entities are assumed to be 
(imperfect) FD devices which communicate in bidirectional FD mode, as depicted in Fig. \ref{fig1}. 

To model the location of small cells and the distribution of the user equipments (UEs) within each small cell we use a Mat\'ern HCPP. In this process the locations of BSs follow a PPP with a hard core distance constraint between BSs. The locations of the users associated to 
each BS are assumed to be uniformly distributed in the coverage area of the corresponding 
BS. Moreover, in line with common practice, we assume that the positions of the BSs are known by the 
network operator, leading to a Mat\'ern HCPP with known locations for the cluster 
heads (BSs).


\begin{figure}
	\centering
	\scalebox{0.9}{\begin{tikzpicture}
		
		\node at (0,1.15) {\includegraphics[scale=0.3]{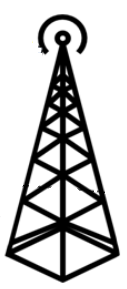}};
		\node at (1.35,-0.2) {\includegraphics[scale=0.06]{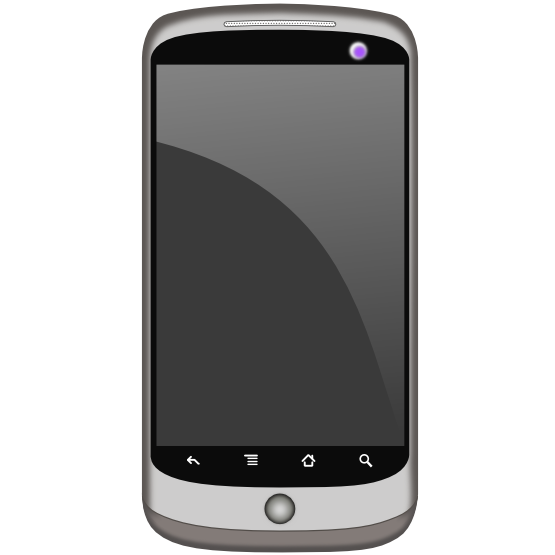}};
		\node at (3.75,-.35) {\includegraphics[scale=0.06]{ue.png}};
		\draw[draw=black,dotted] (0,0) ellipse (4.3cm and 1.95cm);
		
		\draw[draw=black,dashed] (-2.5,-0.75) ellipse (1.25cm and .45cm);
		\draw[draw=black,dashed] (-1.3,+1.25) ellipse (.85cm and .2cm);
		\draw[draw=black,dashed] (+2.25,+0.75) ellipse (1cm and .3cm);
		
		\node at (-2.5,-0.2) {\includegraphics[scale=0.15]{macro_cell.png}};
		\node at (-1.3,+1.7)  {\includegraphics[scale=0.11]{macro_cell.png}};
		\node at (+2.25,1.25) {\includegraphics[scale=0.13]{macro_cell.png}};
		
		\draw[draw=black,dashed] (-2.65,+.7) ellipse (0.48cm and .15cm);
		\draw[draw=black,dashed] (+1.25,+1.35) ellipse (0.45cm and .095cm);
		\draw[draw=black,dashed] (0.25,-1.1) ellipse (0.44cm and .11cm);
		\draw[draw=black,dashed] (1,-1.5) ellipse (0.47cm and .14cm);
		
		\node at (-2.64,1) {\includegraphics[scale=0.075]{macro_cell.png}};
		\node at (+1.2,+1.65) {\includegraphics[scale=0.07]{macro_cell.png}};
		\node at (0.23,-.78) {\includegraphics[scale=0.08]{macro_cell.png}};
		\node at (1,-1.15) {\includegraphics[scale=0.08]{macro_cell.png}};
		
		\draw[draw=black,dashed] (2.75,-.45) ellipse (1.3cm and 0.4cm);
		
		\node at (2.65,0) {\includegraphics[scale=0.12]{macro_cell.png}};
		
		\draw[draw=black,dashed] (-1.3,+1.25) ellipse (.85cm and .2cm);
		\draw[draw=black,dashed] (+2.25,+0.75) ellipse (1cm and .3cm);	
		
		\node at (-0.2,-.45) {\scriptsize Macro};
		\node at (-3,-1) {\scriptsize Pico};
		\node at (-1.8,+1.25) {\scriptsize Pico};
		\node at (-1.8,+1.25) {\scriptsize Pico};
		\node at (+1.7,.7) {\scriptsize Pico};
		\node at (3,-1.1) {\scriptsize Picocell};
		\node at (-3.1,0.45) {\scriptsize Femto};
		\node at (.25,-1.35) {\scriptsize Femto};
		\node at (1.05,-1.76) {\scriptsize Femto};
		\node at (1.75,1.55) {\scriptsize Femto};

		\draw [latex-latex,draw=black] (2.75,0.35)--(3.55,-.2);
		\node at (-3.35,-.65) {\includegraphics[scale=0.06]{ue.png}};
		\node at (-1.65,-.75) {\includegraphics[scale=0.06]{ue.png}};
		
		\draw [latex-latex,draw=black] (-3.25,-0.35)--(-2.65,+.35);
		\draw [latex-latex,draw=black,dashed] (-2.35,+.35)--(-1.65,-.5);
		\draw [latex-latex,draw=black,dotted] (.2,2.1)--(1.25,0.15);
		\end{tikzpicture}}
	\caption{System model: an HCN with one macrocell and several low-power BSs.}
	\label{fig1}
\end{figure}


We consider a Rayleigh fading, path loss dominated, AWGN channel model.
As a result, the interference at the desired UE from another network entity located at distance $x$ is given by $Ph{\left\| x 
	\right\|^{- \alpha }}$, where $P$ is the transmit power of the interferer, $h$ is an 
exponential random variable modeling Rayleigh fading ($h \sim \exp \left( 1 
\right)$), and $\alpha$ represents the path loss exponent. 

In a FD HCN, a UE in the network experiences three different types of 
interference: (1) RSI, due to concurrent transmission and reception in the same 
frequency band at the same time and imperfect cancellation; (2) interference from BSs that are 
transmitting in the same resource blocks (RBs) as those in which the UE is being served; and 
(3) interference from other UEs in the network that are transmitting in the 
same RBs in which the UE is being served. In the HD scenario, we instead assume frequency--division duplexing, where the UE does not experience RSI or interference from other UEs. However, the interference from other BSs is still present.


\begin{figure}[!t]
	\centering
	\includegraphics[width=0.45 \textwidth]{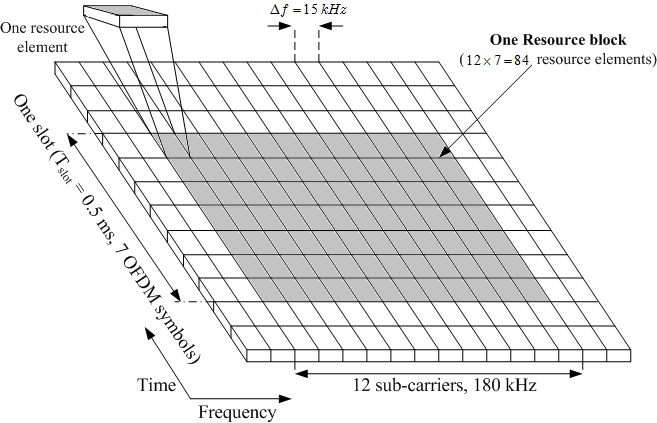}
	\caption{Illustration of one resource block in LTE-A.}
	\label{Fig1.5}
\end{figure}

In the FD HCN, the Signal--to--Interference plus 
Noise Ratio (SINR) at the desired FD UE is expressed as 
\begin{align} 
\nonumber
& \text{SINR}_{\text{FD}} = \\ \nonumber & \frac{{{P_{i}}{h_{{x_i}}}{\left\|x_i\right\|}{^{ - \alpha_i 
			}}}}{{\sum\limits_k \left( {{\sum\limits_{x \in {\Phi _k^{\text{BS}}}} 
			{{P_k}{h_x}\left\|x\right\|{^{ - {\alpha _k}}}} }  + 
		{\sum\limits_{y \in {\Phi _k^{\text{UE}}}} {{P_{\text{UE}}}} } {h_y}\left\|y\right\|{^{ - {\alpha 
					_k}}}} \right) + \eta {P^\kappa } + {\sigma ^2}}}.\\ 
\label{eq7}
\end{align}
In this notation, the numerator represents 
the desired signal power received from a BS in the $i^{th}$ tier which serves 
the UE. Here, a tier is defined as the set of BSs with a given density and the same 
characteristics including average transmit power, supported data rate, and
coverage area \cite{DhillonkHetNet}. 
The first and second terms in the denominator represent the interference from other 
BSs and UEs in the network operating in the same RBs as the desired UE. Specifically, $\Phi_k^{\text{BS}}$ and $\Phi_k^{\text{UE}}$ indicate sets containing the 
positions of all \textit{interfering} BSs and UEs in the $k^{\text{th}}$ tier, and the summation is over all 
possible tiers. The third term is the RSI signal power as modeled 
in \eqref{eq}. Finally, $\sigma^{2}$ is the additive noise power. 

This paper analyzes the system performance on a resource block (RB) basis. An illustration of a single RB in LTE-A is presented in Fig. \ref{Fig1.5}. The reason for following a RB analysis is three-fold: i) the service rate, $R[k]$, can be replaced by the instantaneous capacity of the channel. This is due to the fact that the service time, $T_f$, is less than the coherence time of the channel, but still sufficiently long to support the information theoretic assumption of adaptive modulation and capacity achieving coding; ii) the service rate, $R[k]$, becomes statistically independent from one RB to another, as was assumed in deriving \eqref{eq5}; iii) most importantly, the RB based analysis will assist to find a very tight, computationally efficient and scalable lower bound for analyzing EC in the meaningful range of values for the QoS exponent. Consequently, in this work, the unit for both the service rate,
$R[k]$, and the effective capacity, $EC(\theta)$, is defined as bits per RB. 

We recall that the number of bits delivered to a UE during an interval, $T_f$, in a given 
bandwidth, $BW$, if capacity achieving modulation and coding are used, can be 
represented as 

\begin{equation}
\text{R}=T_f BW\log_2\left(1+\text{SINR}\right)\text{.}
\label{eq8}
\end{equation}
Therefore, the effective capacity of the desired UE based on \eqref{eq5} can be 
expressed as 
\begin{align}
\text{EC}\left(\theta\right)&= 
-\frac{1}{\theta}\log\left(\mathbb{E}\left\{\text{exp}\left(-\theta T_f 
BW\log_2\left(1+\text{SINR}\right)\right)\right\}\right) \nonumber\\
\label{eq9}
& = 
-\frac{1}{\theta}\log\left(\mathbb{E}\left(\left(1+\text{SINR}\right)^{-\theta 
	T_f BW\log_2e}\right)\right)
\end{align}
where the expectation is taken with respect to the SINR. 

In a HD scenario, a 1/2 scaling factor is needed and, also, the SINR would become
\begin{align} 
\text{SINR}_{\text{HD}} = \frac{{{P_i}{h_{{x_i}}}{\left\|x_i\right\|}{^{ - \alpha_i 
			}}}}{{\sum\limits_k {\sum\limits_{x \in {\Phi _k^{\text{BS}}}} 
			{{P_k}{h_x}\left\|x\right\|{^{ - {\alpha _k}}}} } + {\sigma ^2}}}
\label{Dual-eq7}
\end{align}
where the numerator represents 
the desired signal power received from a BS in the $i^{th}$ tier, which serves 
the UE and the denominator includes noise and DL interference from interfering  BSs in the $k^{th}$ tier and the summation is over all tiers.

\section{Theoretical analysis}
\label{sec:theoretical}
We aim at computing the QoS experienced by a generic UE, that can be placed 
anywhere in the coverage area of its own small cell with uniform distribution. 
To find the exact EC in a given topology, one needs to compute 
\eqref{eq9} either through extensive simulations or by mathematical analysis. 
It is worth mentioning that, if there are $M$ small cells within the macro 
cell communicating with $M$ FD UEs, the associated integrals (or simulation setups) would be in a $2M+1$ 
dimensional parameter space. To understand why this is the case just note that the expectation in \eqref{eq9} is with respect to SINR which is a function of desired and interferer's signal strength. This is seen in \eqref{eq9} as the expectation with respect to SINR is a function of both desired signal and interference.

\subsection{Approximating EC}
\label{subsection:approximating the EC}

Let us define a generic function $g$ of $s, I, a,$ and  $\beta$ as 
follows

\begin{equation}
g(s,I) \buildrel \Delta \over =  \left(1+\frac{s}{I+a}\right)^{-\beta}.
\label{eq10}
\end{equation}
This function has the same structure of the expectation argument in 
\eqref{eq9}, where $s$ models the received signal power, $I$ represents the 
overall interference from other BSs and UEs in the network, $a$ represents the 
RSI and noise, and $\beta=\theta\cdot T_f\cdot BW\cdot\log_2 e$. Based on this 
definition, the EC can be rewritten as

\begin{equation}
EC(\theta)=-\frac{1}{\theta}\log \left( \E\limits_{s,I}  g \left( s,I  
\right)\right)
\label{eq11}
\end{equation}

\begin{Lemma}
	\label{Lemma1}
	For $0\le \beta \le 1$, $g$ is always a concave function of $I$. 
\end{Lemma}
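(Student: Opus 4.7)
My goal is to show that, with $s$, $a$, $\beta$ fixed (and $s, a \ge 0$, $0 \le \beta \le 1$), the map $I \mapsto g(s,I)$ has nonpositive second derivative on the admissible range $I \ge 0$. The cleanest route is to rewrite $g$ in a form that exposes it as a composition of two functions whose concavity properties are elementary, and then invoke the standard composition rule for concave functions. Directly attacking $\partial^2 g / \partial I^2$ from \eqref{eq10} is possible but algebraically fiddly; recasting first avoids a long calculation.

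\textbf{Step 1: algebraic rewriting.} I would start by pulling the $(I+a)$ factor into the base, noting that
\begin{equation}
g(s,I)=\left(\frac{I+a+s}{I+a}\right)^{-\beta}=\left(\frac{I+a}{I+a+s}\right)^{\beta}=\varphi\bigl(h(I)\bigr),
\label{eq:compose}
\end{equation}
where $\varphi(x)=x^{\beta}$ on $x\ge 0$ and $h(I)=(I+a)/(I+a+s)$. This makes the structure transparent: $g$ is a scalar composition of $\varphi$ with $h$.

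\textbf{Step 2: concavity and monotonicity of the outer map.} On $[0,\infty)$, the derivatives of $\varphi(x)=x^{\beta}$ are $\varphi'(x)=\beta x^{\beta-1}\ge 0$ and $\varphi''(x)=\beta(\beta-1)x^{\beta-2}\le 0$ whenever $0\le\beta\le 1$. Hence $\varphi$ is concave and nondecreasing on the relevant interval.

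\textbf{Step 3: concavity of the inner map.} A direct calculation gives $h'(I)=s/(I+a+s)^2$ and $h''(I)=-2s/(I+a+s)^3\le 0$, so $h$ is concave in $I$. Since $h$ maps the admissible $I$ into $(0,1]\subseteq[0,\infty)$, the composition in \eqref{eq:compose} is well-defined. Applying the standard composition rule that a concave nondecreasing function composed with a concave function is concave, I conclude that $I\mapsto g(s,I)$ is concave, proving the lemma. The only nontrivial point is the verification that $h$ is genuinely concave (rather than merely monotone); once that one-line second-derivative computation is done, the rest of the argument is essentially bookkeeping, so I do not expect a substantive obstacle.
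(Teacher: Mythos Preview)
Your argument is correct. The rewriting $g(s,I)=\bigl((I+a)/(I+a+s)\bigr)^{\beta}$ is valid, the outer map $x\mapsto x^{\beta}$ is concave and nondecreasing on $(0,1]$ for $0\le\beta\le1$, the inner map $h(I)=(I+a)/(I+a+s)$ has $h''(I)=-2s/(I+a+s)^{3}\le 0$, and the composition rule you cite is standard. So the lemma follows.

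The paper, by contrast, attacks $\partial^{2}g/\partial I^{2}$ directly: it computes
\[
\frac{\partial^{2}g}{\partial I^{2}}=\frac{\beta s}{(I+a)^{4}}\left(1+\frac{s}{I+a}\right)^{-(\beta+2)}\bigl(-2(I+a)+(\beta-1)s\bigr),
\]
and observes that the sign is governed by the last factor, which is negative whenever $\beta<1+2(I+a)/s=1+2/\mathrm{SINR}$. Your composition argument is cleaner and avoids the algebra you called ``fiddly''; the price is that it only yields concavity on the stated range $0\le\beta\le1$, whereas the paper's brute-force computation actually exhibits the slightly larger range $\beta<1+2/\mathrm{SINR}$ on which $g$ remains concave. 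For the lemma as stated this extra information is not needed, so your route is a perfectly good (and arguably more elegant) alternative.
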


\begin{proof}
	By assuming $s$ as a constant, taking the second derivative of $g$ with respect 
	to $I$ leads to 
	\begin{align}
	\nonumber
	& \frac{{{\partial ^2}g(s,I)}}{{{\partial}{I^2}}} = \\ & \frac{{\beta s}}{{{{\left( {I 
						+ a} \right)}^4}}}{\left( {1 + \frac{s}{{I + a}}} \right)^{ - \left( {\beta  + 
				2} \right)}}\left( { - 2\left( {I + a} \right) + \left( {\beta  - 1} \right)s} 
	\right)
	\label{eq12}
	\end{align}
	which is negative, for all values of $0\le \beta <1 + \frac{2}{{\text{SINR}}}$, where SINR$=\frac{s}{I+a} \ge 0$. This readily shows that $g$ is a concave function of $I$ for any $0\le \beta \le1$. 
\end{proof}
The concavity of $g$ helps us find a tight lower bound for the EC with greatly 
decreased complexity. To this end, by exploiting Jensen's inequality, we obtain
\begin{equation}
EC_{\text{LB}}\left( \theta  \right) =\frac{{ - 1}}{\theta }\log {\E_s}\left[ 
{{{\left( {1 + \frac{s}{{\bar{I}  + a}}} \right)}^{ - \beta }}} \right] 
\le EC\left( \theta  \right) 
\label{eq13}
\end{equation} 
Here, $\bar{I}=\E_I(I)$ is the average interference experienced by the UE, and 
the remaining expectation only applies to the desired signal power. The 
advantage of this lower bound is its much smaller computational complexity. Indeed, 
calculating this lower bound only requires a 1--dimensional integral (or 
simulation) with respect to the desired signal power provided that the average interference on the desired network entity ($\bar I$) is known. 
Therefore, the proposed bound makes this calculation scalable with the size of 
the network, at the possible cost of losing some precision. 

In order to efficiently compute the proposed lower bound in \eqref{eq13}, i.e., $EC_{\text{LB}}\left(\theta\right)$, one has to calculate analytically the 
average interference on the desired UE, $\bar {I} $. Recalling the expression of the interference from 
Section~\ref{System Model}, the average power from an interferer located at distance $x$ from the considered UE can be found as

\begin{equation}
	\E[Ph\Vert x\Vert^{-\alpha}] \mathop  = \limits^{\rm{(a)}}  P\E (h) \E[\Vert 
	x\Vert^{-\alpha}] \mathop  = \limits^{\rm{(b)}} P \E[\Vert x\Vert^{-\alpha}]
	\label{eq14}
\end{equation}
where (a) follows from the fact that the channel coefficient and distance 
between the interferer and the desired UE are independent random variables and 
(b) holds because we assumed the Rayleigh fading channel, $h$, has unit mean. Consequently, all 
our efforts will be focused on finding the average path loss from the desired 
UE to the interferers. 

It is noteworthy that in deriving the lower bound proposed in \eqref{eq13} we implicitly have a constraint on the QoS exponent, $\theta$, through Lemma 1, where we assumed the constraint on $\beta$ as $0 \le \beta \le 1$. This imposes a constraint on $\theta$, as $0 \le \theta \le \frac{1}{T_f BW \log_2 e}$, while in general, the QoS exponent is defined in the range $[0,\, 
\infty)$. Due to the per RB based analysis this constraint on $\theta$ would amount to 
\begin{equation}
0 \le \theta \le \frac{1}{T_f BW \log_2 e} \approx 10^{-2}.
\label{constraintQoS}
\end{equation}
Fortunately, the range of $\theta$ in \eqref{constraintQoS} lies in the meaningful range of the QoS exponent \cite{ZhangQoS}. However, simulations show that our method  gives a good approximation for EC even in a wider range.

\begin{figure}[!t]
	\centering
	\begin{tikzpicture}
	\draw[draw=black,thick,dotted] (-3,-.7) circle (.95);
	\draw[draw=black,thick,dotted] (0,-.1) circle (1.3);
	\draw[draw=black,fill=black] (-1.55,+1.15) circle (0.066);
	\draw[draw=black,fill=black] (0,-.1) circle (0.06);
	\draw[draw=black,fill=black] (-3,-.7) circle (0.06);
	\draw[draw=black,thick,dashed] (-5.5,+1.5) arc (183:365:3.85cm);
	\draw[draw=black,thick] (-2.81,-0.65) arc (10:130:0.2cm);
	\draw[draw=black,thick] (0.25,-.04) arc (-10:95:0.2cm);
	\draw[draw=black,thick] (0.01,.06) arc (90:170:0.2cm);
	\draw[draw=black,thick] (-.8,-.12) arc (160:210:0.15cm);
	
	\draw[latex-,draw=black,thick,dashed] (-5.45,+1.2) -- (-1.55,+1.15);	
	\draw[-latex,draw=black,dashed] (-3,-.7) -- (-3,-1.65);	
	\draw[-latex,draw=black,dashed] (0,-.1) -- (0.25,-1.355);
	\draw[line width=0.7pt, draw=black] (-3,-.7) -- (0.5,.02);
	\draw[-latex, draw=black] (-3,-.7) -- (-3.4,-.2);
	\draw[-latex, draw=black] (0,-.1) -- (0.1,0.65);
	\draw[draw=black] (0.1,0.65) -- (-3.4,-.2);
	\draw[draw=black] (-3.4,-.2) -- (0,-.1);
	
	\node[label=below:\rotatebox{0}{$R_1$}] at (-3.15,-.8) {};
	\node[label=below:\rotatebox{0}{$R_2$}] at (+0.3,-.405) {};
	\node[label=below:\rotatebox{0}{$r_1$}] at (-3.35,-.25) {};
	\node[label=below:\rotatebox{0}{$\theta_1$}] at (-2.8,0) {}; 
	\node[label=below:\rotatebox{0}{$r_2$}] at (-.11,0.7) {};
	\node[label=below:\rotatebox{0}{$\theta_2$}] at (0.41,0.6) {};  
	\node[label=below:\rotatebox{0}{$d$}] at (-1.75,-0.23) {};
	\node[label=below:\rotatebox{0}{$c$}] at (-1.75,.28) {};
	\node[label=below:\rotatebox{0}{$x$}] at (-1.75,0.65) {};
	\node[label=below:\rotatebox{0}{$\gamma$}] at (-.25,0.35) {};
	\node[label=below:\rotatebox{0}{$\psi$}] at (-1.1,0.2) {};
	
	\node at (-1.55,+1.35) {Macro BSs};
	
	\end{tikzpicture}
	\caption{Structure of the interference on the desired UE.}
	\label{fig2}
\end{figure}
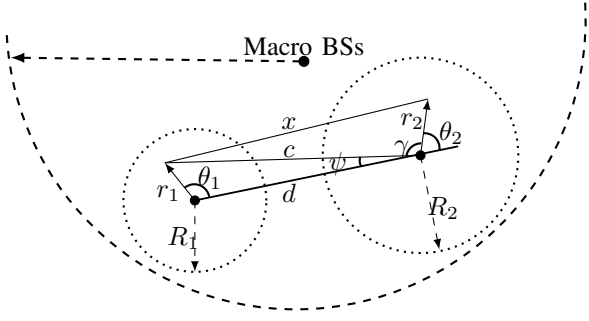

\subsection{Average Interference on a UE}
As mentioned in the previous section we need to provide a framework to find $\bar I$. In the rest of this section, we present the building blocks to achieve this goal. 
\label{subsec:average_interference}

Fig. \ref{fig2} depicts a deployment of two small cells and their 
corresponding coverage areas within a macro cell. This is an example of a 
Mat\'ern HCPP with two cluster heads (BSs) and a hard core distance $ r_h \ge 
{R_1+R_2} $. This assumption for the hard core distance makes the two small 
cells non--overlapping. The probability density functions (PDFs) of the 
interferer and desired UE locations, expressed in polar coordinates, are 
	\begin{equation}
	f_i\left( {{r_i},{\theta_i}} \right) = \left\{ \begin{array}{l}
	\frac{{{r_i}}}{{{ {\pi} } {{R_i}^2}}}\,\,\,\,\,\,\,\,\,\,\,\,\,\,\,\,\,\,\,\,\,\,\,{0 < {r_i} < R_i, \quad 0 < \theta_i < 2 \pi }\,\,\,\,\,\,\,\,\,\,\,\,\,\,\\
	0\,\,\,\,\,\,\,\,\,\,\,\,\,\,\,\,\,\,\,\,\,\,\,\,\,\,\,\,\,\,\,\,\,\,\,\,\,\, \textrm{Otherwise}
	\end{array} \right.
	\label{UEDis}
	\end{equation}
	respectively for  $i= 1,2$; see Fig.~3.

\begin{proposition}
	\label{proposition1}
	Consider a UE being served in a small cell in presence of an interferer BS, according to Fig.~3. The average interference from a specific interferer BS on the desired UE is 
	\begin{equation}
	{I_{{\rm{BS - UE}}}}=P_{BS}{d^{ - \alpha }}\left[ {1 + \frac{{{{\alpha 
						^2}}}}{8}\left( {\frac{{{R_2^4}}}{{3{d^4}}} + \frac{{{R_2^2}}}{{{d^2}}}} 
		\right)+\frac{{ \alpha }}{4} {\frac{{{R_2^4}}}{{3{d^4}}}}} \right].
	\label{eq18}
	\end{equation}
	where $P_{BS}$ is the transmit power of the BS, $d$ is the distance between the interferer BS and the BS serving the desired UE, $\alpha$ is the pathloss exponent, and $R_2$ is the coverage radius of the small cell in which the desired UE is being served. 		
\end{proposition}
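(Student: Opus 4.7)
The plan is to evaluate the expected interference power $\mathbb{E}[P_{BS} h\,\|x\|^{-\alpha}]$, where $\|x\|$ denotes the (random) distance from the fixed interferer BS to the uniformly distributed desired UE. Since, following the same manipulation used in \eqref{eq14}, the Rayleigh coefficient has unit mean and is independent of geometry, the task reduces to computing $P_{BS}\,\mathbb{E}[\|x\|^{-\alpha}]$ against the density in \eqref{UEDis} with $i=2$. I place the serving BS at the origin, the interferer BS along a reference axis at distance $d$, and parameterize the UE by polar coordinates $(r_2,\theta_2)$ inside its serving cell of radius $R_2$.

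Next, I would apply the law of cosines to obtain $\|x\|^2 = d^2 + r_2^2 - 2 d r_2 \cos\theta_2$ and factor the leading scale:
\begin{equation}
\|x\|^{-\alpha} = d^{-\alpha}\bigl(1+\epsilon\bigr)^{-\alpha/2}, \qquad \epsilon = \frac{r_2^2 - 2 d r_2 \cos\theta_2}{d^2}.
\label{plan:factor}
\end{equation}
Since the coverage radius of a small cell is typically much smaller than the inter-BS separation ($R_2 \ll d$), $\epsilon$ is small and I can Taylor expand
\begin{equation}
(1+\epsilon)^{-\alpha/2} \approx 1 - \tfrac{\alpha}{2}\,\epsilon + \tfrac{\alpha(\alpha+2)}{8}\,\epsilon^2 .
\label{plan:taylor}
\end{equation}

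I then take expectations term by term using the marginal densities induced by \eqref{UEDis}: the angular part is uniform on $[0,2\pi)$, so $\mathbb{E}[\cos\theta_2]=0$ and $\mathbb{E}[\cos^2\theta_2]=1/2$, while the radial density $2r_2/R_2^2$ gives $\mathbb{E}[r_2^2]=R_2^2/2$ and $\mathbb{E}[r_2^4]=R_2^4/3$. These yield
\begin{equation}
\mathbb{E}[\epsilon] = \frac{R_2^2}{2d^2}, \qquad \mathbb{E}[\epsilon^2] = \frac{R_2^4}{3 d^4} + \frac{R_2^2}{d^2}.
\label{plan:moments}
\end{equation}
Substituting into \eqref{plan:taylor}, I split $\tfrac{\alpha(\alpha+2)}{8} = \tfrac{\alpha^2}{8} + \tfrac{\alpha}{4}$ and notice that the linear-in-$\alpha$ contribution $\tfrac{\alpha}{4}\cdot\tfrac{R_2^2}{d^2}$ arising from $\mathbb{E}[\epsilon^2]$ exactly cancels the first-order Jensen-type correction $-\tfrac{\alpha}{2}\mathbb{E}[\epsilon] = -\tfrac{\alpha R_2^2}{4 d^2}$, leaving precisely the bracketed expression in \eqref{eq18} after multiplying by $P_{BS} d^{-\alpha}$.

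The only genuinely delicate step is the truncation in \eqref{plan:taylor}: the Taylor series is being cut at second order without an explicit remainder, so the claimed equality should really be read as the leading-order approximation valid for $R_2/d \ll 1$ (the regime ensured by the Mat\'ern hard-core constraint $r_h \ge R_1 + R_2$). Everything else is bookkeeping: choosing the coordinate origin at the serving BS so that the UE density factorizes in $(r_2,\theta_2)$, and recognizing the cancellation of the cross term that turns what looks like an $\mathcal{O}(\alpha)$ correction into an $\mathcal{O}(\alpha)\cdot R_2^4/d^4$ term only.
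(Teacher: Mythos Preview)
Your proposal is correct and follows essentially the same approach as the paper: law of cosines for $\|x\|^2$, factoring out the leading $d^{-\alpha}$ scale, a second-order Taylor expansion of $(1+\epsilon)^{-\alpha/2}$, and then averaging over the uniform UE distribution to obtain the stated bracket. The only cosmetic difference is that the paper first carries out the calculation with a generic interferer position $(r_1,\theta_1)$ (so that $d$ is replaced by $c$) and then specializes to $r_1\to 0$, $c\to d$ for the BS case---a detour motivated by reusing the intermediate expression for Proposition~\ref{proposition2}; your direct specialization to the fixed BS at distance $d$ is equivalent and slightly cleaner for this proposition in isolation.
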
 

\begin{proof}
	\label{proposition1proof}

We can find the squared distance between the desired UE and the interferer as (Fig.~3)
\begin{equation}
{x^2} = {c^2} + r_2^2 - 2c{r_2}\cos \left( \gamma  \right)
\label{eq15}
\end{equation}
where 
\begin{align}
{c^2} =& r_1^2 + {d^2} - 2{r_1}d\cos \left( {{\theta _1}} \right) \\
&\gamma \,\, = \pi - \theta _2 - \psi
\end{align}
and $\psi$ is also a random variable 
depending on the interferer's position.

Our goal is to compute the average path loss between the considered UE and the interferers
\begin{equation} 
\E\left[ {{{\left\| x \right\|}^{ - \alpha }}} \right] = \E \left[ {{{\left( 
			{{{ {{c^2} + r_2^2 + 2c{r_2}\cos \left( {{\theta _2} + \psi } \right)} }}} 
			\right)}^{ - \frac{\alpha }{2}}}} \right]\
\label{eq16}
\end{equation}
which is challenging to compute in general. 
To this end, we first compute the expectation in \eqref{eq16} by assuming a 
fixed position for the interferer, i.e., fixed $(r_1,\theta_1)$. Subsequently, 
we compute the expectation of the resulting quantity with respect to all possible 
values of $(r_1,\theta_1)$. 

Regarding the first step, since we assumed $(r_1,\theta_1)$ is fixed, 
$c$ and $\psi$ become constants, which facilitates the analysis

\begin{align}
&\E\limits_{\left( {{r_2},{\theta _2}} \right)} \left[ {\left. {{{\left( 
				{{c^2} + r_2^2 + 2c{r_2}\cos \left( {{\theta _2} + \psi } \right)} \right)}^{ - 
				\frac{\alpha }{2}}}} \right|\left( {{r_1},{\theta _1}} \right)} \right]\\ 
&= \int\limits_0^{2\pi } {\int\limits_0^{R_2} {{{\left( {{c^2} + r_2^2 + 
					2c{r_2}\cos \left( {{\theta _2} + \psi } \right)} \right)}^{ - \frac{\alpha 
				}{2}}}} } .\frac{1}{\pi }\frac{{{r_2}}}{{{R_2^2}}}{\mkern 1mu} {\mkern 1mu} 
dr_2d\theta_2 \\ \nonumber
&= \int\limits_0^{2\pi } {\int\limits_0^{R_2} {{c^{ - \alpha }}{{\left( {1 + 
					{{\left( {\frac{{{r_2}}}{c}} \right)}^2} + 2\left( {\frac{{{r_2}}}{c}} 
					\right)\cos \left( {{\theta _2} + \psi } \right)} \right)}^{ - \frac{\alpha 
				}{2}}}} } \frac{1}{\pi }\frac{{{r_2}}}{{{R_2^2}}}{\mkern 1mu} {\mkern 1mu} 
d{r_2}d{\theta _2} \\
&\mathop  \simeq \limits^{(a)} {c^{ - \alpha }}\int\limits_0^{2\pi } 
{\int\limits_0^{R_2} 
	{\left\{ {1 - \frac{\alpha }{2}\left[ {{{\left( {\frac{{{r_2}}}{c}} \right)}^2} 
				+ 2\left( {\frac{{{r_2}}}{c}} \right)\cos \left( {{\theta _2} + \psi } 
				\right)} \right]} \right.  } } +\\ \nonumber
&\left. {\frac{\alpha \left( {\alpha  + 2} \right)}{8}{{\left[ {{{\left( 
						{\frac{{{r_2}}}{c}} \right)}^2} + 2\left( {\frac{{{r_2}}}{c}} \right)\cos 
				\left( {{\theta _2} + \psi } \right)} \right]}^2}} \right\}.\frac{1}{\pi 
}\frac{{{r_2}}}{{{R_2^2}}}{\mkern 1mu} {\mkern 1mu} d{r_2}d{\theta _2}\\ 
\label{eq17} 
& = {c^{ - \alpha }}\left[ {1 + \frac{{{{\alpha ^2}}}}{8}\left( 
	{\frac{{{R_2^4}}}{{3{c^4}}} + \frac{{{R_2^2}}}{{{c^2}}}} \right)+\frac{{ \alpha 
		}}{4} {\frac{{{R_2^4}}}{{3{c^4}}}}} \right] \\ \nonumber
	\label{fixedr1theta1}
	\end{align}
	where in (a) we used the first three terms of the Taylor series expansion of 
	${\left( {1 + x} \right)^{ - \omega}} = 1 - \omega x + \frac{{\omega\left( 
			{\omega + 1}\right)}}{2!}{x^2} +\dots $. 
	The Taylor approximation is legitimate if $x<1$ $(c>R_2)$ which is already 
	satisfied considering the repulsive point process we have assumed for the small 
	cells, characterized by the hard core distance $r_h\ge{R_1+R_2}$.
	
	We recall that this result holds for any fixed values of 
	$\left(r_1,\theta_1\right)$. In particular, by setting $r_1 \rightarrow 0$ 
	(i.e., $c \rightarrow d$ in \eqref{eq17}), we obtain the average path loss 
	component between a randomly deployed UE and the BS. Therefore, the average 
	interference that an external BS causes to the considered UE uniformly placed 
	in any point within the coverage area of its small cell would become
	\begin{equation}
	{I_{{\rm{BS - UE}}}}=P_{BS}{d^{ - \alpha }}\left[ {1 + \frac{{{{\alpha 
						^2}}}}{8}\left( {\frac{{{R_2^4}}}{{3{d^4}}} + \frac{{{R_2^2}}}{{{d^2}}}} 
		\right)+\frac{{ \alpha }}{4} {\frac{{{R_2^4}}}{{3{d^4}}}}} \right].
	\label{eq18}
	\end{equation}
\end{proof}	
Due to the possible concurrent transmissions on the UL and DL in the FD scenarios, in the next proposition the average interference from another UE is provided.

\begin{proposition}
	\label{proposition2}
	The average interference from a uniformly deployed UE located in a neighboring small cell on the desired UE~is
	\begin{align}
		\nonumber
		&\,\,{I_{{\text{UE - UE}}}} = {P_{UE}} \cdot \left( {\mathop {\E} \limits_{\left( {{r_1},{\theta _1}} \right)} \left[ {{c^{ 
						- \alpha }}} \right]} + \right. \\
		&\left. {{ \frac{{{\alpha ^2}{R_2^2}}}{8}\mathop {\E} 
				\limits_{\left( {{r_1},{\theta _1}} \right)} \left[ {{c^{ - (\alpha  + 2)}}} 
				\right] + \frac{{\alpha \left( {\alpha  + 2} \right){R_2^4}}}{{24}}\mathop {\E} 
				\limits_{\left( {{r_1},{\theta _1}} \right)} \left[ {{c^{ - (\alpha  + 4)}}} 
				\right]}} \right).
		\label{eq19}
	\end{align}
where $P_{UE}$ is the transmit power of the interferer UE, and $c$ is the distance between the interferer and the BS which serves the UE. The expectation terms are with respect to the location of the interferer, i.e., $(r_1,\theta_1)$. 
\end{proposition}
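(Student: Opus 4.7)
The plan is to reuse the computation that was already carried out in the proof of Proposition~1, stopping one step earlier. In Proposition~1 the author fixed $(r_1,\theta_1)$, performed a Taylor expansion of $(1+u)^{-\alpha/2}$, integrated over the desired UE's coordinates $(r_2,\theta_2)$, and then specialized to $r_1\to 0$ (i.e.\ $c\to d$) because the interferer was a BS located at the cell center. For Proposition~2 the interferer is instead a UE uniformly distributed in its own small cell, so I keep $(r_1,\theta_1)$ random and take the outer expectation against the PDF in~\eqref{UEDis}.

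Concretely, I would first invoke the intermediate identity obtained inside the proof of Proposition~1,
\begin{equation}
\mathop{\E}\limits_{(r_2,\theta_2)}\!\Bigl[\|x\|^{-\alpha}\,\Big|\,(r_1,\theta_1)\Bigr]
\;\simeq\; c^{-\alpha}\!\left[1+\tfrac{\alpha^{2}}{8}\!\left(\tfrac{R_{2}^{4}}{3c^{4}}+\tfrac{R_{2}^{2}}{c^{2}}\right)+\tfrac{\alpha}{4}\tfrac{R_{2}^{4}}{3c^{4}}\right],
\label{eq:reuse}
\end{equation}
which is valid as long as $c>R_{2}$ (guaranteed by the hard-core constraint $r_{h}\ge R_{1}+R_{2}$, since the interferer UE lies in a neighboring cell of radius $R_{1}$ around a BS at distance $d$, so $c\ge d-R_{1}\ge R_{2}$). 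The next step is to distribute the bracket, group like powers of $c$, and collect the three resulting terms
\begin{equation}
c^{-\alpha} \;+\; \tfrac{\alpha^{2}R_{2}^{2}}{8}\, c^{-\alpha-2} \;+\; \tfrac{\alpha(\alpha+2)R_{2}^{4}}{24}\, c^{-\alpha-4},
\end{equation}
where the last coefficient comes from combining $\alpha^{2}/24+\alpha/12$.

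Then I would take the outer expectation over $(r_1,\theta_1)$, use linearity, multiply by $P_{UE}\,\E[h]=P_{UE}$ as in~\eqref{eq14} (Rayleigh fading with unit mean, independent of geometry), and arrive at the three-term expression in~\eqref{eq19}. The quantity $c$ is the distance between the interferer UE and the BS serving the desired UE, and $c^{2}=r_{1}^{2}+d^{2}-2r_{1}d\cos\theta_{1}$ as in~\eqref{eq15}, so each of the three expectations reduces to a two-dimensional integral against the uniform density $r_{1}/(\pi R_{1}^{2})$ on the interferer's disk.

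The main obstacle is really bookkeeping rather than analysis: one must (i) confirm that the Taylor approximation used in Proposition~1 remains justified when $(r_{1},\theta_{1})$ is random, which follows from the hard-core distance assumption $r_{h}\ge R_{1}+R_{2}$ ensuring $c>R_{2}$ throughout the support; (ii) keep the moments $\E[c^{-\alpha}]$, $\E[c^{-\alpha-2}]$, $\E[c^{-\alpha-4}]$ symbolic, since they do not admit an elementary closed form in $d,R_{1},\alpha$ and are left in integral form; and (iii) recognize that, unlike in Proposition~1 where $r_{1}\to 0$ collapsed $c$ to the deterministic $d$, here none of the three moments can be factored out, which is precisely why~\eqref{eq19} contains three distinct expectation terms instead of a single closed-form prefactor.
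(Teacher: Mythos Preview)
Your proposal is correct and follows essentially the same route as the paper: the paper's proof consists of the single sentence ``We compute the expectation of \eqref{eq17} with respect to the position of the interferer, i.e., $(r_1,\theta_1)$, which leads to \eqref{eq19},'' which is exactly your plan of reusing the intermediate conditional result from Proposition~1 and then averaging over the interferer's location. One small refinement: the paper does not leave the moments $\E[c^{-\alpha}]$, $\E[c^{-(\alpha+2)}]$, $\E[c^{-(\alpha+4)}]$ as raw integrals but observes (immediately after the proof) that $\E_{(r_1,\theta_1)}[c^{-\alpha}]$ itself has the structure of \eqref{eq18} with $P_{BS}=1$ and $R_2\to R_1$, and the other two follow by shifting $\alpha$, so your point~(ii) about ``no elementary closed form'' slightly understates what is available.
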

\begin{proof}
	\label{proposition2proof}
	We compute the expectation of \eqref{eq17} with respect to the position of the interferer, i.e., $(r_1,\theta_1)$, which leads to \eqref{eq19}. 
\end{proof}	
	It must be noted that to compute the quantity in \eqref{eq19}, one needs to calculate only 
	$\E\limits_{\left({{r_1},{\theta _1}} \right)}\left[c^{-\alpha}\right]$, since 
	the other two expectations can be immediately obtained by replacing $\alpha$ 
	with $\alpha+2$ and $\alpha+4$.
	We further observe that this expectation corresponds to the average path loss 
	component between the interferer and the desired UE's BS. This quantity can be derived from \eqref{eq18} by setting $P_{BS}=1$ and changing $R_2$ to 
	$R_1$.
	
	The proposed relations in \eqref{eq18} and \eqref{eq19} can hence be used as a 
	basic mathematical tool to investigate the system performance. 

	\subsection{Average Interference from a free UE}
	\label{Average Interference on a UE from a free UE} 
	We should note that the only source of interference on a desired UE that 
	does not follow the structure provided in Fig. \ref{fig2} is the UE 
	connected to the macro BS{, that we refer to as \textit{free } UE. We 
	make here a simplifying assumption in order to derive an approximate 
	expression for the interference caused by the free UE to the desired UE, 
	of which we are computing the effective capacity.
	Specifically, let $R_1'$ be the coverage radius of the macro BS, $R_2'$ the 
	coverage radius of the small cell where the desired UE is placed, $d'$ the 
	distance between the macro BS and the BS of the small cell and 
	$\theta^{\ast}$ the angle of the circular sector centered in the macro BS 
	which fully encloses the coverage area of the small cell, as represented 
	in Fig.~\ref{fig2.5}. We assume that the free UE can be deployed anywhere 
	in the coverage area of the macro BS except in the sector with angle 
	$2\theta^{\ast}$, i.e., the shaded area in Fig.~\ref{fig2.5}}. Thus, the 
	PDF of the free UE in the polar coordinates of the macro cell would be
	\begin{equation}
	f\left( {{r^\prime_1},{\theta^\prime_1}} \right) = \left\{ \begin{array}{l}
	\frac{{{r^\prime_1}}}{{{\left( {\pi  - {\theta ^ * }} \right)} {{R'_1}^2}}}\,\,\,\,\,\,\,\,\,\,\,\,\,\,\,\,\,\,\,\,\,\,\,{\theta ^ * } < {\theta^\prime_1} < 2\pi  - {\theta ^ * }\,\,\,\,\,\,\,\,\,\,\,\,\,\,\\
	0\,\,\,\,\,\,\,\,\,\,\,\,\,\,\,\,\,\,\,\,\,\,\,\,\,\,\,\,\,\,\,\,\,\,\,\,\,\, -{\theta ^*} < {\theta^\prime_1} < {\theta ^ * }
	\end{array} \right.
	\label{freeUEDis}
	\end{equation}
The next proposition presents the average interference from a free UE in this setting. 
	\begin{proposition}
		\label{proposition3}
	The \textit{approximate} average interference from the free UE on a desired UE can be found by calculating
	\begin{align}
	\nonumber
	& {I_{{\text{free UE - UE}}}} \approx {P_{{\rm{UE}}}}. \\ \nonumber
	&\left[ {\left( {\frac{{\left( {\alpha  + 2} \right)\left( {\alpha  + 3} \right)}}{6} + \frac{2}{{\alpha  - 2}}} \right)} \right.\frac{1}{{{{R'_1}^2}}}\E\left( {{{c'}^{ - (\alpha  - 2)}}} \right) \\ \nonumber
	&- \frac{{\alpha \left( 3 \alpha^3 +11 \alpha^2-18 \alpha -56 \right)}}{{15(\pi  - {\theta ^*})\left( {{\alpha ^2} - 1} \right)}}\frac{{\sin {\theta ^*}}}{{{{R'_1}^2}}}\E\left( {{{c'}^{ - (\alpha  - 2)}}\cos \psi '} \right) \\ 
	&\left. { - \frac{{\left( {\alpha  + 2} \right)\left( {\alpha  + 4} \right)}}{{16(\pi  - {\theta ^*})}}\frac{{\sin 2{\theta ^*}}}{{{{R'_1}^2}}}\E\left( {{{c'}^{ - (\alpha  - 2)}}\cos 2\psi '} \right)} \right].
	\label{eq19.5}
	\end{align}
where $c'$ is the distance between the macro BS to the desired UE which may deployed in any location of the macro cell according to \eqref{freeUEDis}. Moreover $\E\left( {{{c'}^{ - (\alpha  - 2)}}\cos \psi '} \right)$, and $\E\left( {{{c'}^{ - (\alpha  - 2)}}\cos 2\psi '} \right)$ are provided in \eqref{Ec(cospsi)} and \eqref{Ec(cos2psi)}, respectively.
\end{proposition}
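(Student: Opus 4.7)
The plan is to mirror the two-step averaging used in Proposition~\ref{proposition2}, but now with the interferer's position drawn from the restricted macro-cell density \eqref{freeUEDis} rather than from a uniform small-cell density. The starting point is equation~\eqref{eq17}, which already gives the average path loss to the desired UE (uniform inside its own small cell) conditioned on the interferer being fixed at distance $c$ from the small-cell BS with angular offset $\psi$. The remaining task is to average this conditional expression against the free UE's distribution over the permitted sector of the macro cell.

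First I would re-parameterize using macro-cell polar coordinates $(r_1',\theta_1')$ centered at the macro BS, writing $c$ and $\psi$ via the triangle formed by the free UE, the macro BS, and the small-cell BS at separation $d'$ (law of cosines and its angular counterpart). The three terms that appear inside \eqref{eq17}, of the form $c^{-\alpha}$, $c^{-(\alpha+2)}$, $c^{-(\alpha+4)}$, are then Taylor-expanded in $r_1'/d'$ and in $(R_2'/c)^2$, keeping only angular harmonics through $\cos 2\theta_1'$; this second truncation is what makes the final answer \emph{approximate} rather than exact.

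Next I would carry out the angular integration over $\theta_1'\in[\theta^{*},2\pi-\theta^{*}]$. The zeroth harmonic contributes $2(\pi-\theta^{*})$, the $\cos\theta_1'$ harmonic contributes $-2\sin\theta^{*}$, and the $\cos 2\theta_1'$ harmonic contributes $-\sin 2\theta^{*}$. Dividing by the normalizing factor $(\pi-\theta^{*}){R_1'}^{2}$ from \eqref{freeUEDis} and collecting the surviving radial integrals yields the three trigonometric prefactors visible in \eqref{eq19.5}, with the radial moments identified as $\E[{c'}^{-(\alpha-2)}]$, $\E[{c'}^{-(\alpha-2)}\cos\psi']$, and $\E[{c'}^{-(\alpha-2)}\cos 2\psi']$. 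The exponent shift from $-\alpha$ to $-(\alpha-2)$ is produced by pairing one factor of $r_1'$ from the polar Jacobian and a second from the Taylor step with the $1/{R_1'}^{2}$ normalization.

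The main obstacle is the bookkeeping needed to keep the two nested Taylor expansions consistent so that the final expression closes on exactly these three scalar moments. This requires verifying that integrals of $\cos k\theta_1'$ over $[\theta^{*},2\pi-\theta^{*}]$ are zero or of negligible order for $k\ge 3$, and carefully matching coefficients order-by-order so that the three constants $(\alpha+2)(\alpha+3)/6+2/(\alpha-2)$, $\alpha(3\alpha^3+11\alpha^2-18\alpha-56)/(15(\alpha^2-1))$, and $(\alpha+2)(\alpha+4)/16$ emerge from the combined Taylor remainders. Multiplying the resulting bracket by $P_{\rm UE}$ then produces the stated formula \eqref{eq19.5}.
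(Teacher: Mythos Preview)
Your proposal has a genuine gap, and it stems from reversing the conditioning order relative to what the geometry allows. You plan to start from \eqref{eq17}, which has already averaged over the desired UE's position conditional on a fixed interferer, and then average over the free UE's location by Taylor-expanding in $r_1'/d'$. But the free UE ranges over the entire macro cell of radius $R_1'$, and since $d'<R_1'$, the ratio $r_1'/d'$ exceeds~$1$ on a large part of the integration domain; the expansion simply diverges there. No amount of harmonic truncation fixes this.

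The paper's proof (Appendix~B) handles exactly this obstacle by swapping the order: it first fixes the \emph{desired} UE's position, so that $c'$ --- which in the statement is the distance from the macro BS to the desired UE, not from the free UE to the small-cell BS as you implicitly assume --- becomes a constant. The integral over the free UE's radial coordinate $r_1'\in[0,R_1']$ is then split at $r_1'=c'$ into two pieces $I_1$ and $I_2$: on $[0,c']$ one factors out $c'$ and expands in $r_1'/c'<1$, while on $[c',R_1']$ one factors out $r_1'$ and expands in $c'/r_1'<1$. After integrating each piece (and using ${R_1'}^{-t}\ll {c'}^{-t}$ to drop boundary terms in $I_2$), the sum $I_1+I_2$ is a polynomial in $c'^{-(\alpha-2)}$, $c'^{-(\alpha-2)}\cos\psi'$, and $c'^{-(\alpha-2)}\cos 2\psi'$; the final expectation over the desired UE's position then produces the three moments in \eqref{eq19.5}. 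Without this splitting step your route cannot reach the stated formula, and your identification of the ``radial moments'' with $\E[c'^{-(\alpha-2)}]$ etc.\ is inconsistent with what $c'$ actually is.
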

The proof of this proposition is provided in Appendix \ref{AppB}. It is worth noting that in the simulations, we let the free UE be in any location of the macro cell except the small cell under investigation. We will then draw comparisons between simulations and the analysis.  

  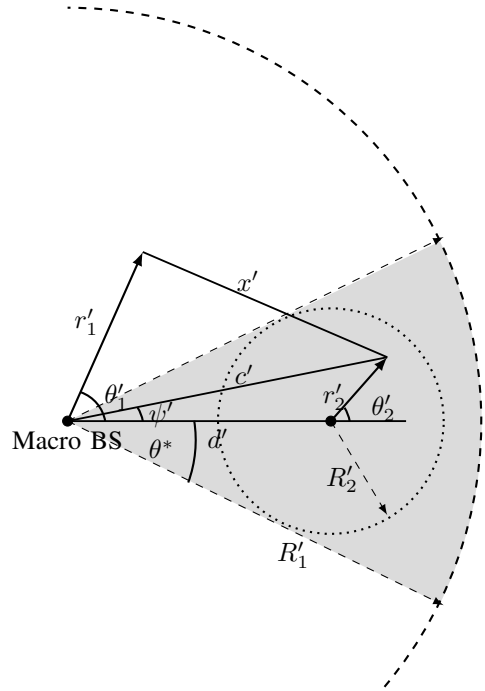
\begin{figure}[!t]
  	\centering
  	\begin{tikzpicture}
  	\draw[draw=none,fill=black!20!white,opacity=0.7] (-5,1) -- (-0.05,-1.43) 
  	arc (334:386:5.5cm) -- (-5,1);
  	\draw[draw=black,thick,dotted] (-1.5,1) circle (1.5);
  	\draw[draw=black,fill=black] (-5,1) circle (0.066);
  	\draw[draw=black,fill=black] (-1.5,1) circle (0.066);
  	\draw[line width=0.7pt, draw=black] (-5,1) -- (-.5,1);
  	\draw[-latex,draw=black,thick] (-1.5,1) -- (-.75,1.85);	
  	\draw[-latex,draw=black,thick] (-5,1) -- (-4,3.25);
  	\draw[line width=0.7pt, draw=black] (-4,3.25) -- (-.75,1.85);
  	\draw[line width=0.7pt, draw=black] (-5,1) -- (-.75,1.85);
  	\draw[-latex, draw=black,dashed] (-1.5,1) -- (-.75,-.25);
  	\draw[-latex, draw=black,dashed] (-5,1) -- (0,-1.45);
  	\draw[-latex, draw=black,dashed] (-5,1) -- (0,3.45);
  	
  	\draw[draw=black,thick,dashed] ([shift=(320:5.5cm)]-5,1) arc 
  	(320:450:5.5cm);
  	\draw[draw=black,thick] (-4.5,1) arc (10:70:.5cm);
  	\draw[draw=black,thick] (-4,1) arc (10:32:.5cm);
  	\draw[draw=black,thick] (-1.25,1) arc (10:32:.5cm);
  	\draw[draw=black,thick] ([shift=(333:1.8cm)]-5,1) arc (340:369:1.65cm);

  	\node[label=below:\rotatebox{0}{${R^{\prime}_2}$}] at (-1.35,.65) {};
  	\node[label=below:\rotatebox{0}{$R^{\prime}_1$}] at (-2,-.35) {};
  	\node[label=below:\rotatebox{0}{$r^{\prime}_1$}] at (-4.75,2.75) {};
  	\node[label=below:\rotatebox{0}{$\theta^{\prime}_1$}] at (-4.35,1.75) {}; 
  	\node[label=below:\rotatebox{0}{$\psi^{\prime}$}] at (-3.75,1.5) {};
  	\node[label=below:\rotatebox{0}{$r^{\prime}_2$}] at (-1.45,+1.75) {};
  	\node[label=below:\rotatebox{0}{$\theta^{\prime}_2$}] at (-.8,1.6) {};  
  	\node[label=below:\rotatebox{0}{$d^{\prime}$}] at (-3,1.15) {};
  	\node[label=below:\rotatebox{0}{$c^{\prime}$}] at (-2.65,2.) {};
  	\node[label=below:\rotatebox{0}{$x^{\prime}$}] at (-2.6,3.25) {};
  	\node[label=below:\rotatebox{0}{$\theta^{\ast}$}] at (-3.75,1) {};

  	\node at (-5,.75) {Macro BS};
  	
  	\end{tikzpicture}
  	\caption{Structure of interference caused by a free UE on the desired UE. 
  	{The shaded sector is the area where we assume that the free UE can 
  	not be deployed.}}
  	\label{fig2.5}
  \end{figure}

\section{Simulations and Results}
\label{sec:simulations}
{
This section provides numerical simulations to analyze the performance of the FD and HD HCNs under statistical QoS constraints. In addition, results are reported to assess the scalability of our proposed approach to analyze system performance. Although a single macro cell overlaid with randomly placed small cells is considered, the analytic framework proposed in this paper readily allows the study of a scenario with multiple macro cells. In the reported results, the UE in the dashed small cell is under consideration. The exact curves are obtained via Monte-Carlo simulation. A lower bound is obtained using (13) and the average interference on the desired UE, $\bar{I}$, calculated through Propositions \ref{proposition1}-\ref{proposition3}.  
During the simulations 
the system for the HD scenario is considered to be FDD. See Tab.~\ref{tab:parameters} for the parameters adopted to carry out the simulations. 

In the following, Section \ref{Results on HD and FD HCNs} provides results on HD and FD HCNs and studies the impact of FD technology on the QoS experienced by the end users. In Section \ref{Validating the lower bound} the accuracy and complexity of the proposed lower bound are investigated. Finally in Section \ref{FD vs. HD Tradeoff} the crossover point under different QoS requirements is investigated, and the minimum level of SI cancellation required for the FD system to outperform the HD one is determined.

	
	\begin{table}[t]
		\center
		\caption{System and Simulation Parameters}
		\begin{tabular}{lll}
			\hline
			\cmidrule(r){1-3}
			Description    & Parameter & Value \\
			\midrule
			Macro BS TX Power       & $P_{\text {M-BS}}$         &  46 dBm      \\
			Small Cell BS TX Power        & $P_{\text {P-BS}}$         &  35 dBm      \\
			User TX Power            & $P_{\text {UE}}$             &  23 dBm      \\
			Service Time             & $T_f$                        &  0.5 ms      \\
			Service Bandwidth        & $BW$                         &  180 kHz     \\
			Path loss exponent       & $\alpha$                     &  3           \\
			Noise Power              & $ \sigma^2 $                 &  -120 dBm    \\
			Pico--Pico BSs Minimum Distance  &-                     &  180 meters    \\
			Macro--Pico BSs Minimum Distance &-                     &  80 meters    \\                
			Coverage Radii of Pico cells               &- 
			& 90 meters \\
			Non--linear SI cancellation parameter &$\kappa$          &  1           \\
			Linear SI cancellation parameter     &$\eta$            &  Not fixed   \\
			QoS exponent             &$\theta$                      &  Not fixed   \\
			Density of small BSs & $\lambda$ & Not fixed \\
			\bottomrule
		\end{tabular}
		\label{tab:parameters}
	\end{table}

\begin{figure*}[!t]
	\vspace{-0.25cm}
	\centering
	\subfigure[]{\includegraphics[width=0.42 \textwidth]{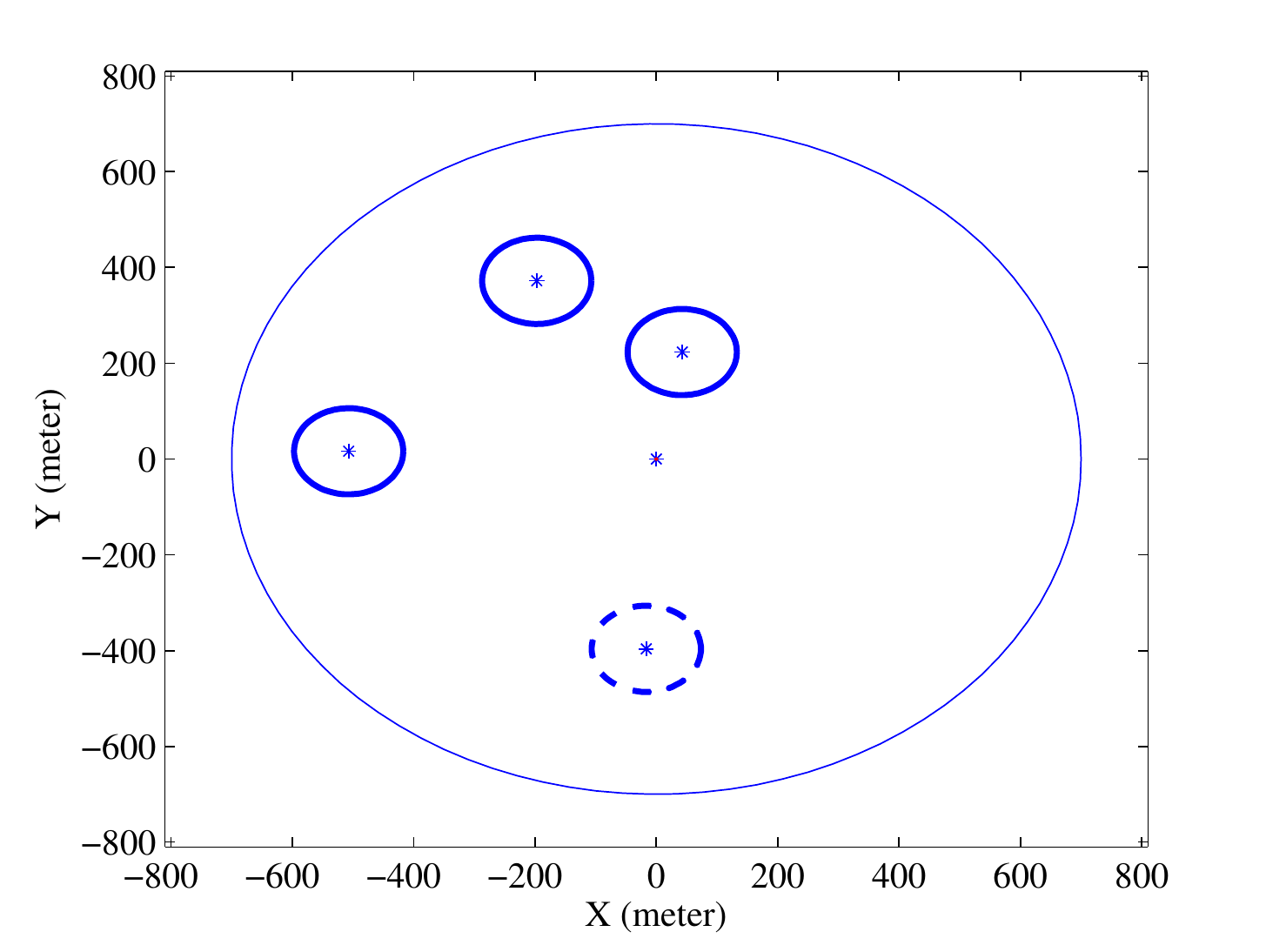}}
	\label{fig.10}
	\hfil
	\subfigure[]{\includegraphics[width=0.42 
		\textwidth]{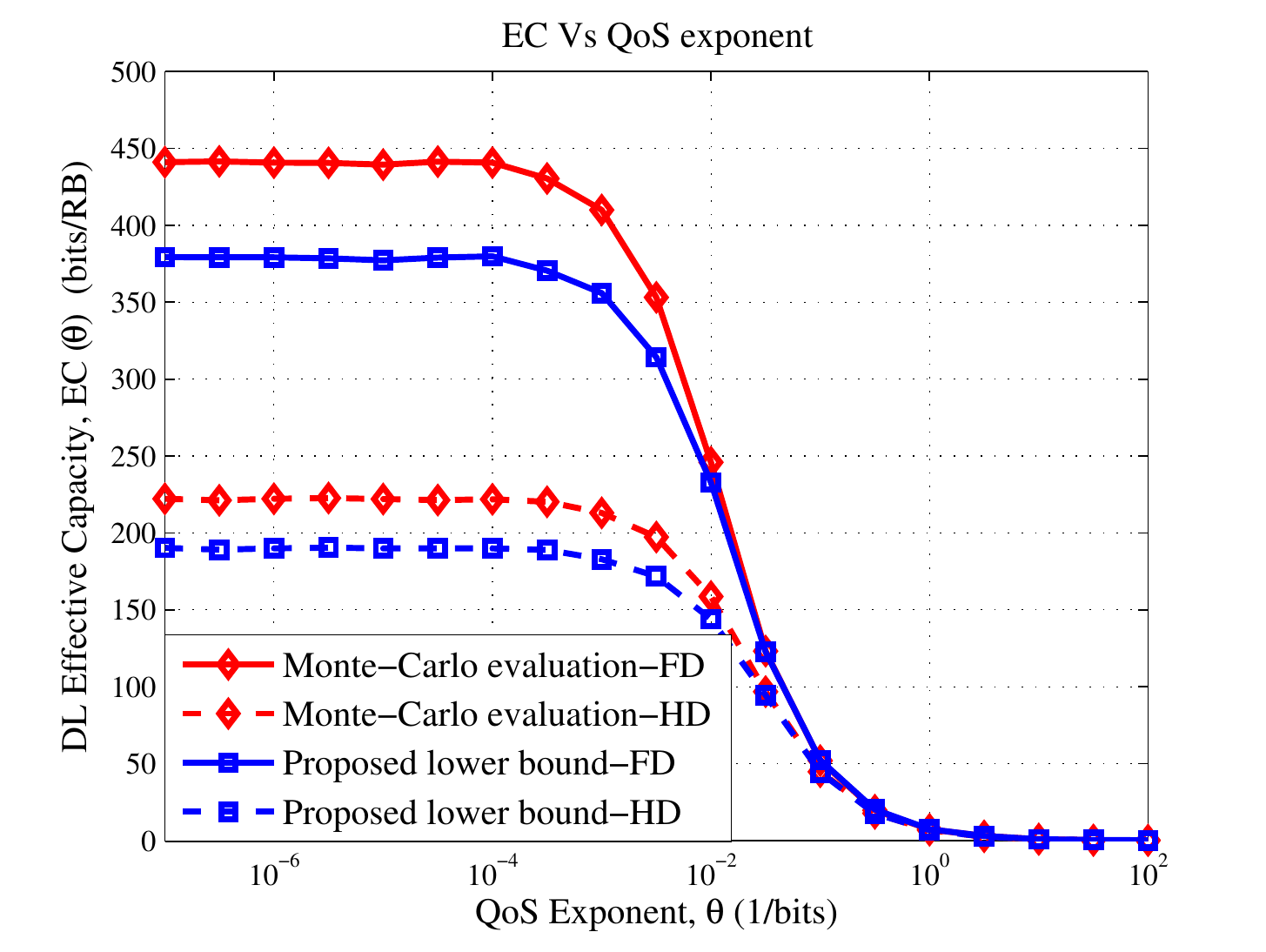}} 
	\caption{(a) A specific realization of sparse small cell deployment using a Mat\'ern HCPP with density $\lambda=5 \, 
		\text{small cells/km}{^2}$, and (b) DL effective capacity experienced by a typical UE (uniformly distributed in the dashed small cell) vs. QoS exponent, for HD and FD (exact and lower bounds) with $\eta = -150$ dB.}
	\label{Figure1}
\end{figure*}	
	
	\begin{figure*}[!t]
		\centering
		\subfigure[]{\includegraphics[width=0.42 \textwidth]{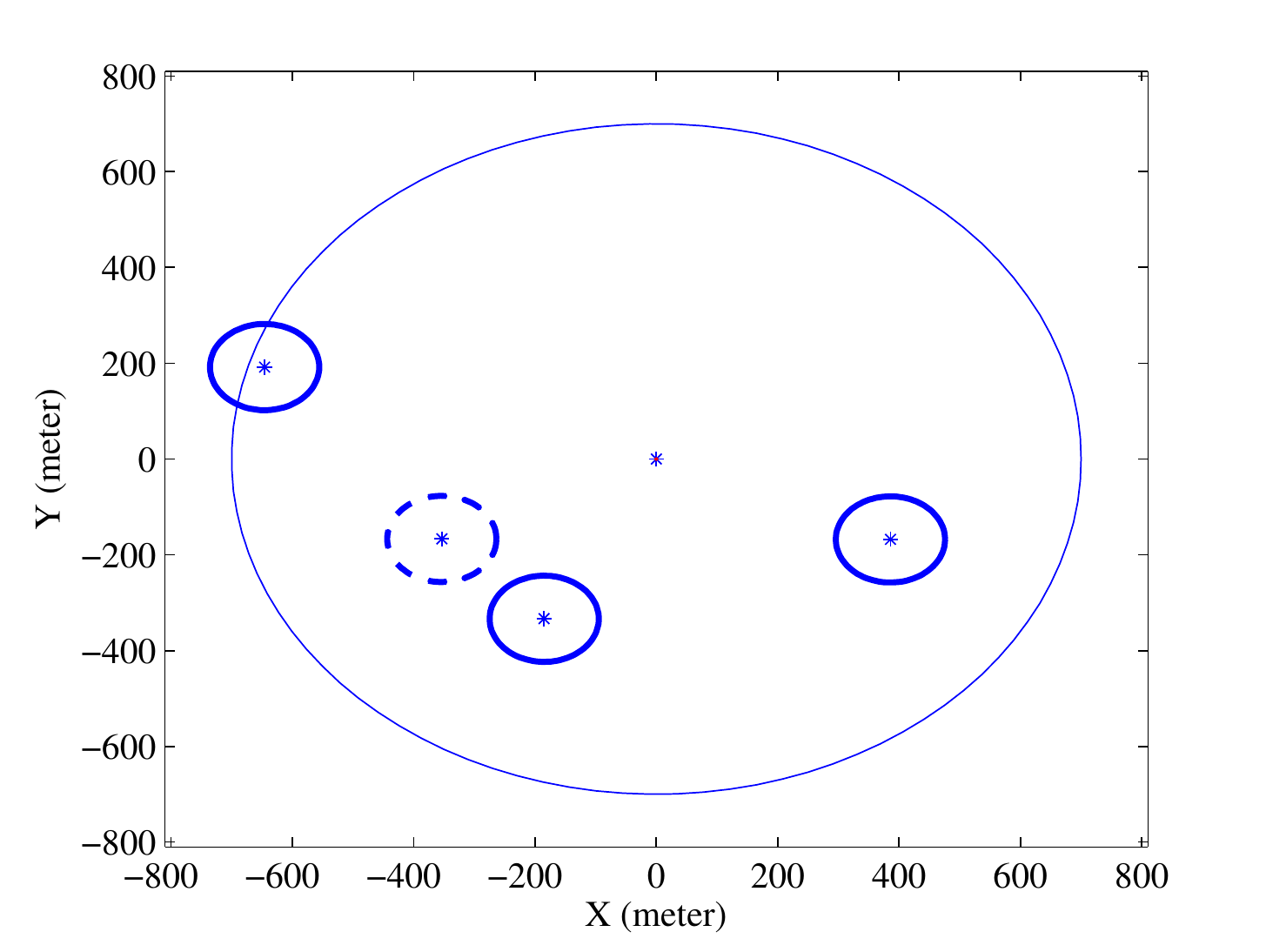}}
		\hfil
		\subfigure[]{\includegraphics[width=0.42 \textwidth]{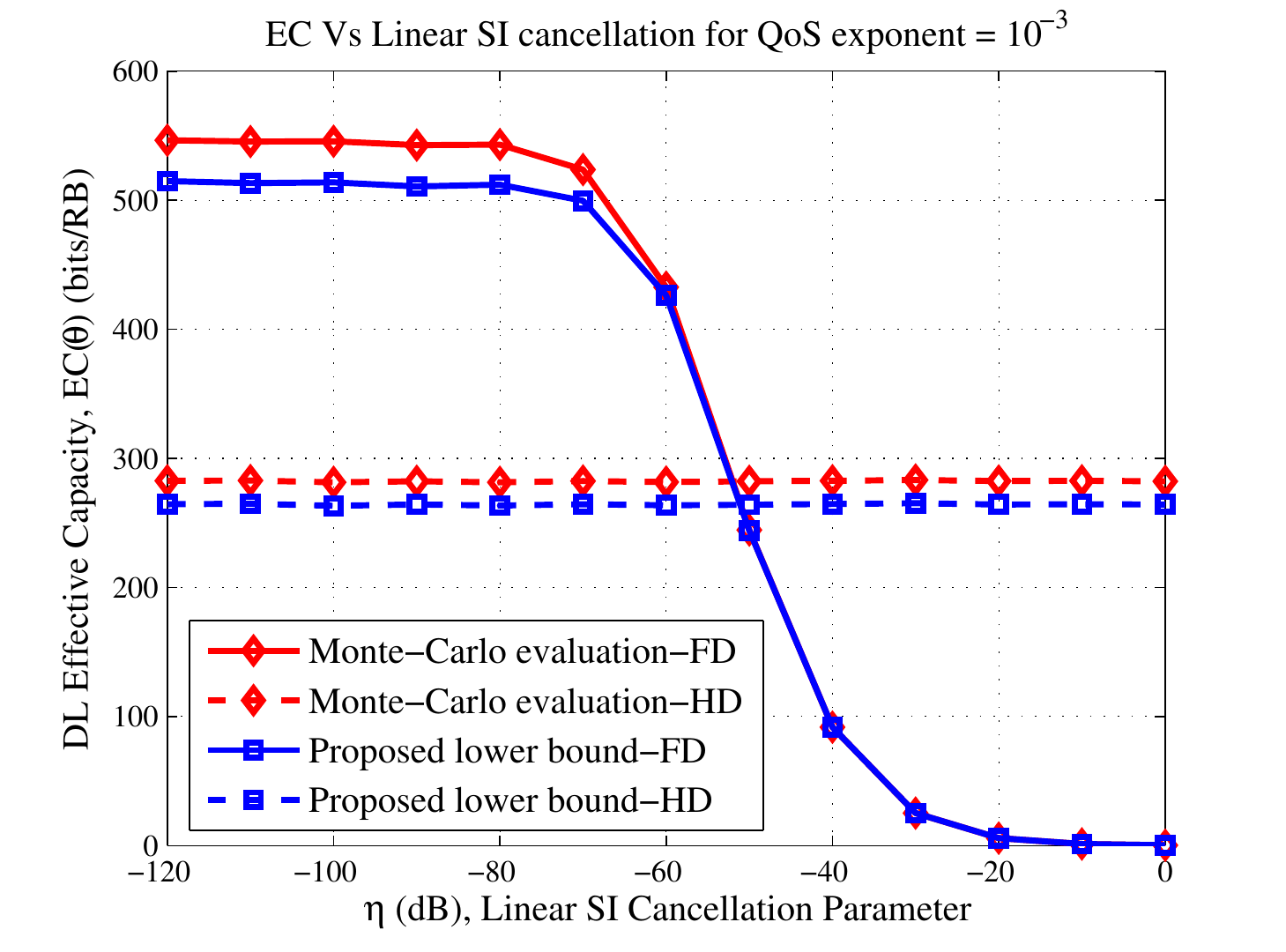}}
		\caption{(a) A specific realization of sparse small cell deployment 
		using a Mat\'ern HCPP with density $\lambda=5 \, 
			\text{small cells/km}$$^2$, and (b) corresponding DL effective 
			capacity experienced by a typical UE (uniformly distributed in the 
			dashed small cell) vs. linear SI 
			suppression ratio, for HD and FD (exact and lower bounds). QoS 
			exponent $\theta=10^{-3}$ (1/bit).}
		\label{Figure2}
	\end{figure*}
	
	\begin{figure*}[!t]
		\vspace{-0.25cm}
		\centering
		\subfigure[]{\includegraphics[width=0.42 \textwidth]{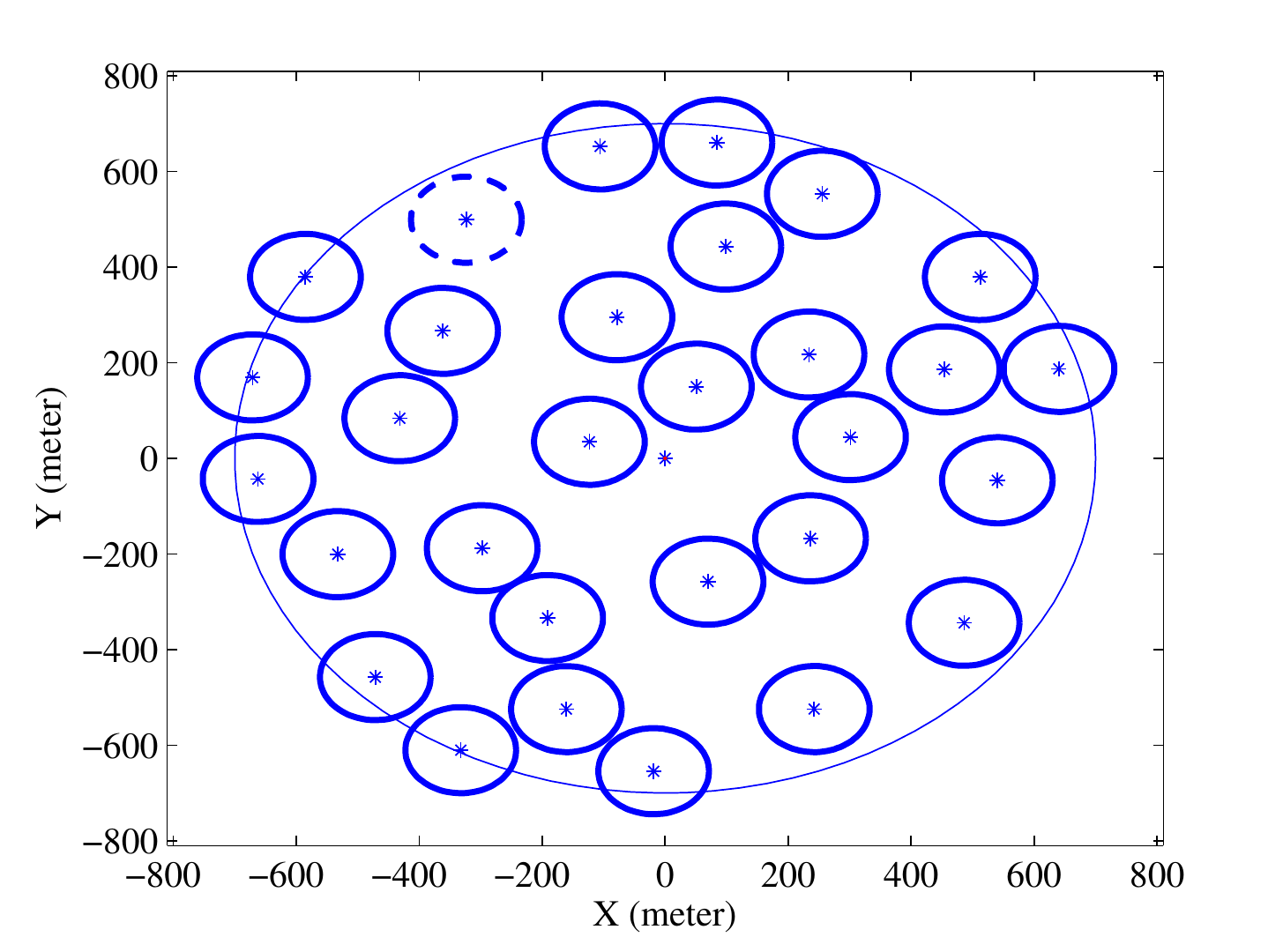}}
		\label{fig.8}
		\hfil
		\subfigure[]{\includegraphics[width=0.42
			\textwidth]{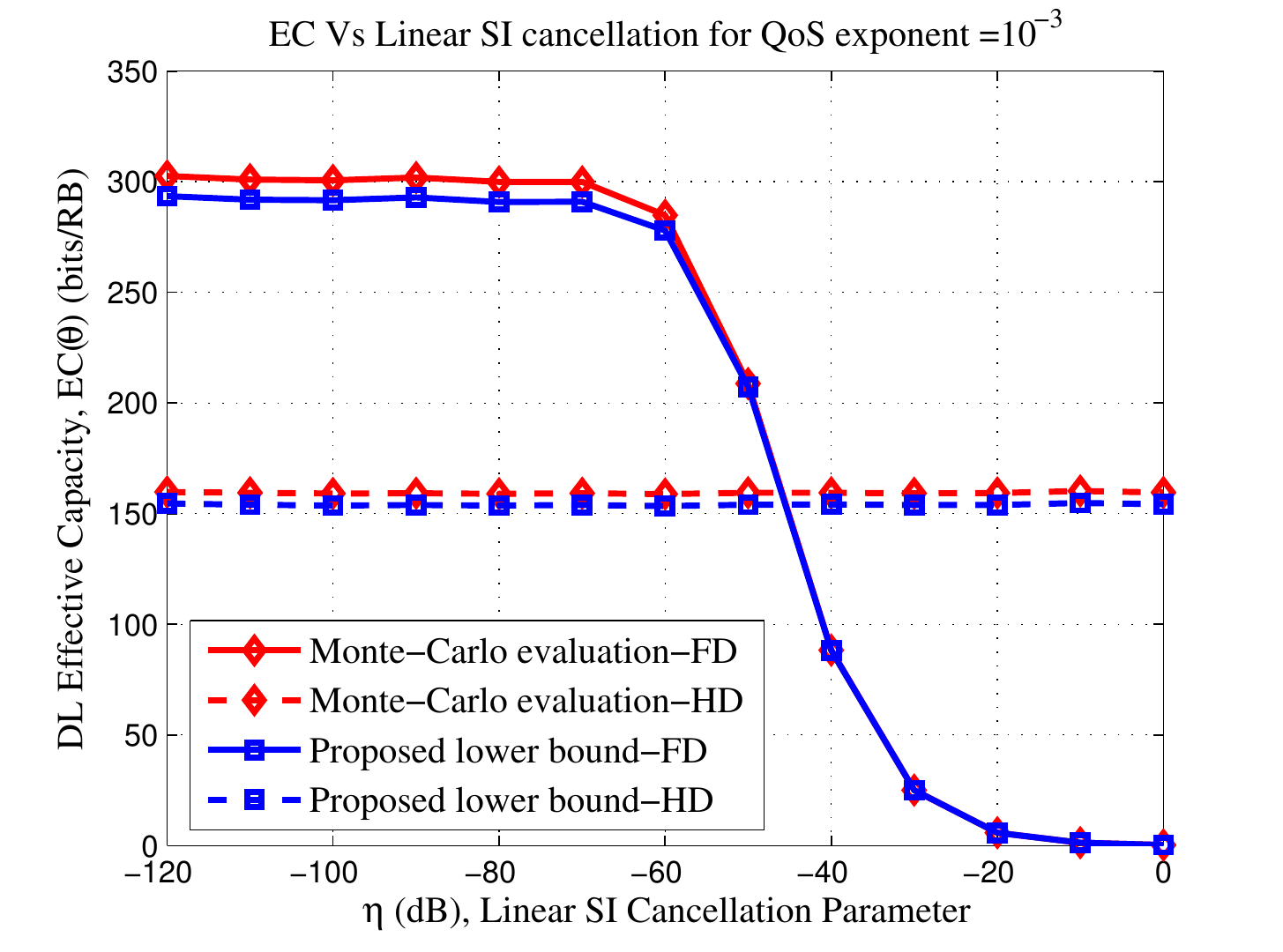}} 
		\caption{(a) A specific realization of dense small cell deployment using a Mat\'ern HCPP with density $\lambda=50 \, 
			\text{small cells/km}$$^2$, and (b) DL effective 
			capacity experienced by a typical UE (uniformly distributed in the dashed small cell) vs. linear SI 
			suppression ratio, for HD and FD (exact and lower bounds). QoS 
			exponent $\theta=10^{-3}$ (1/bit).}
		\label{Figure3}
	\end{figure*}



\subsection{Performance evaluation of HD and FD HCNs} 
\label{Results on HD and FD HCNs}

 Fig. \ref{Figure1} reports the DL EC of a perfect FD 
 system compared with that of a HD network for different QoS exponents. As it can be seen, a 1.95X gain from a perfect FD compared to HD 
 in terms of~EC is obtained. This is due to the fact that in completely FD systems there would be more interference in the network compared to a completely HD 
 one since UL and DL transmissions may occur in the same RBs. It is worth noting again that when $\theta \rightarrow 0$ there is no sensitivity at all to delays in the system, and in this case the EC tends to the Shannon capacity.

 Figs.~\ref{Figure2} and ~\ref{Figure3} show DL EC versus linear SI cancellation parameter, $\eta$, under sparse and dense HCN scenarios, respectively. As it can be seen, depending on the level of SI cancellation, the system performance under FD scenario exhibits various performance results. For instance, for low SI cancellation parameter, i.e., $\eta \rightarrow 0$ dB, the system can not provide any service to the UE. However, as the system goes towards perfect FD, i.e., $\eta \rightarrow -\infty$ dB, the system provides approximately 2X service to the UE compared to the HD counterpart. In addition, there is a minimum linear SI cancellation required so that FD operation mode outperform HD one. We call this crossover point and as it can bee seen, it is approximately $-50$~dB in this case. Detailed discussions on this crossover point and its dependence on the system parameters are presented in Sec.~\ref{FD vs. HD Tradeoff}.
 
 Moreover, in Fig. \ref{Figure2}.b the EC for the UE is 
 approximately 1.67X that observed in Fig. 
 \ref{Figure3}.b for the same level of SI cancellation. This is mainly 
 due to the lower level of interference in the network. However, the area spectral efficiency is much higher in the HCN of 
 Fig.~\ref{Figure3}. In fact, as the network becomes denser, a larger number of 
 UEs can be served simultaneously in a single RB. Thus, despite the 
 decreased service rate to each UE in a dense HCN, a very high area spectral efficiency can be 
 expected.

\subsection{Assessment of the Lower Bound: Accuracy and Complexity}
\label{Validating the lower bound}
As Figs.~\ref{Figure2} and~\ref{Figure3} demonstrate, the proposed lower bound is tight, and interestingly as the system becomes denser, the lower bound becomes tighter. Furthermore, according to Figs.~\ref{Figure2} and \ref{Figure3}, increasing the SI parameter, $\eta$, the gap between the lower bound and the exact value of EC decreases. This is due to the fact that as the SI parameter increases, the RSI dominates other sources of interference on the desired UE and since its power is constant as $\eta P^{\kappa}$, the interference tends to a constant and Jensen's inequality (invoked in \eqref{eq13}) becomes an equality. 

Fig.~\ref{Figure1}.b shows the impact of QoS exponent on the accuracy of the lower bound. As it is shown, under loose QoS requirements, i.e., $\theta \rightarrow 0$, the lower bound becomes looser, however, still it is tight enough to give insight on system performance. On the other hand, for stringent QoS requirements the lower bound becomes very tight. This figure demonstrates that in a very broad range of QoS exponent the proposed approach gives accurate results on the EC that a cellular network provides in DL for a UE.

\begin{figure}[!t]
	\vspace{-0.25cm}
	\centering
	\subfigure[]{\includegraphics[width=0.5 \textwidth]{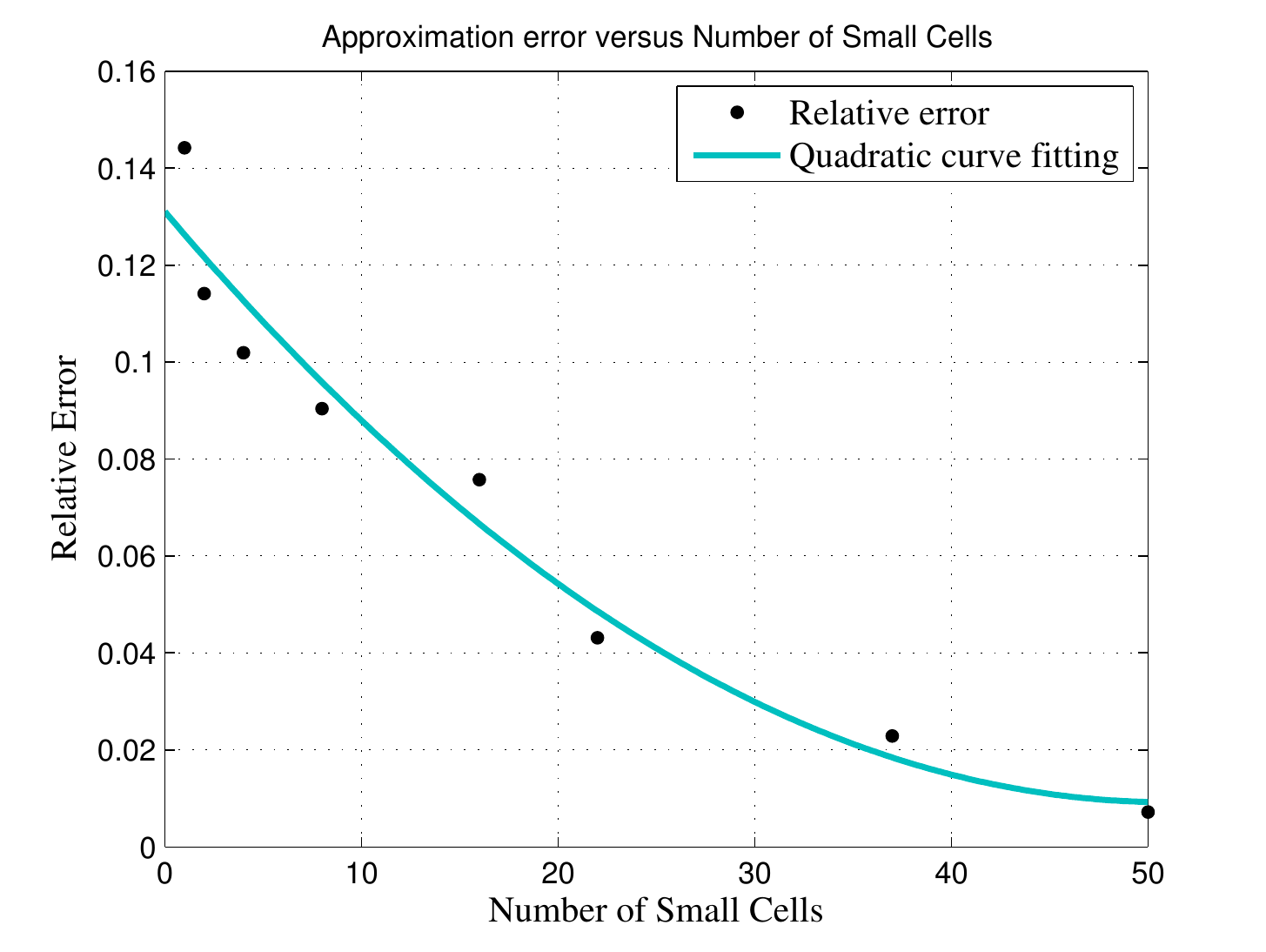}}
	\label{fig.11}
	\hfil
	\subfigure[]{\includegraphics[width=0.5 
		\textwidth]{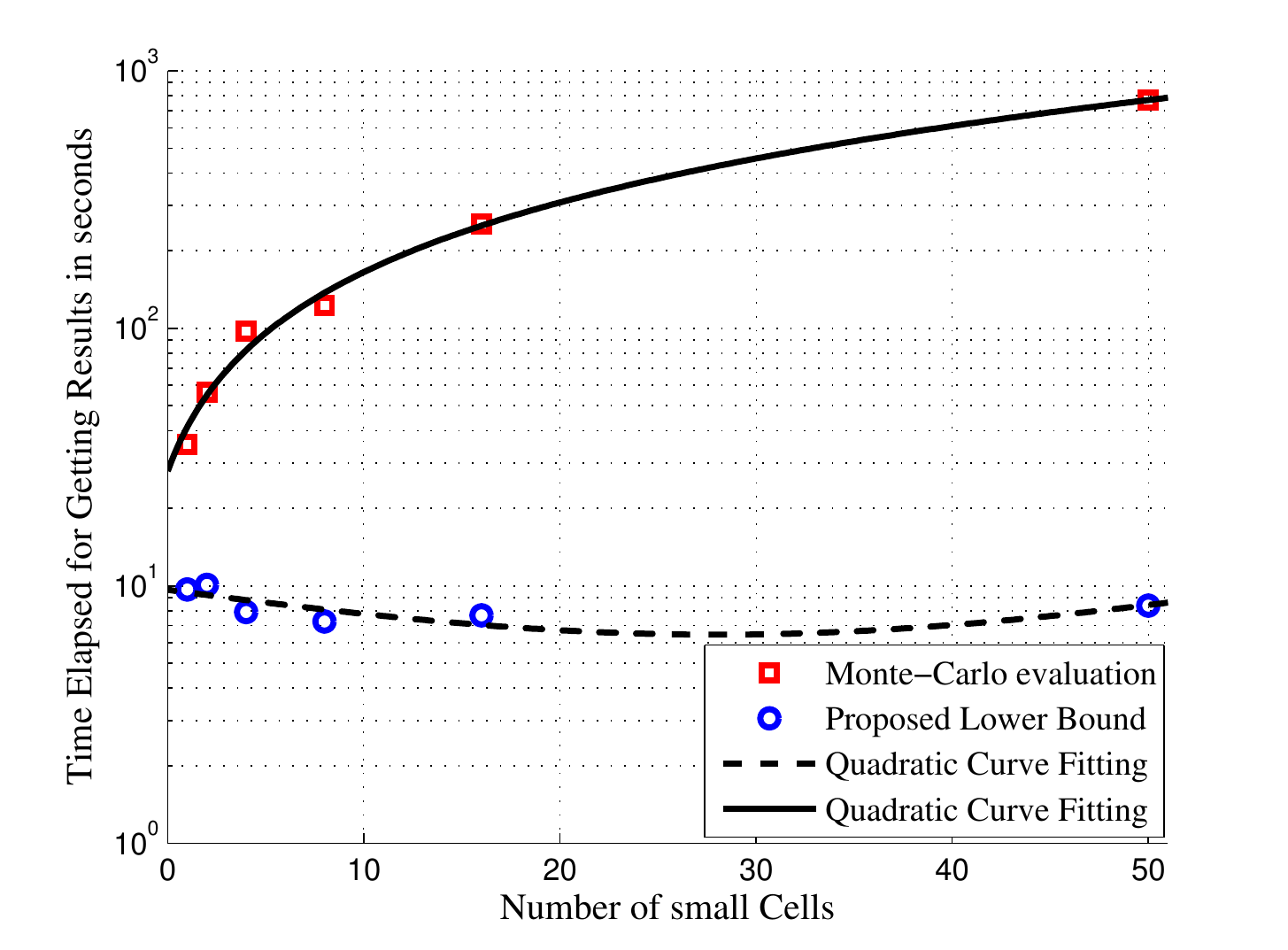}} 
	\caption[Accuracy and complexity analysis of the proposed scheme versus the number of small cells in the Macro cell averaged over 10 independent random network realizations. The parameters for these simulations are $\theta=10^{-5}$, and $\eta=-110dB$, $\kappa=1$.  Error of the lower bound (difference between exact and lower bound per RB) (a), the time elapsed to get the results (b).]{Accuracy and complexity analysis of the proposed scheme versus the number of small cells in a macro cell averaged over 100 independent network realizations. The parameters for these simulations are $\theta=10^{-5}$, and $\eta=-110$ dB, $\kappa=1$. (a)  relative error of the lower bound, and (b) time elapsed to get the results \footnotemark{ }.}
	\label{Accuracy_Complexity}
\end{figure}
\footnotetext{All the simulations were carried out with an Intel Core i5-2.53GHz processor and 4G RAM on a Dell Inspiron 5010.}

\begin{figure}[!h]
	\vspace{-0.25cm}
	\centering
	\includegraphics[width=0.5 \textwidth]{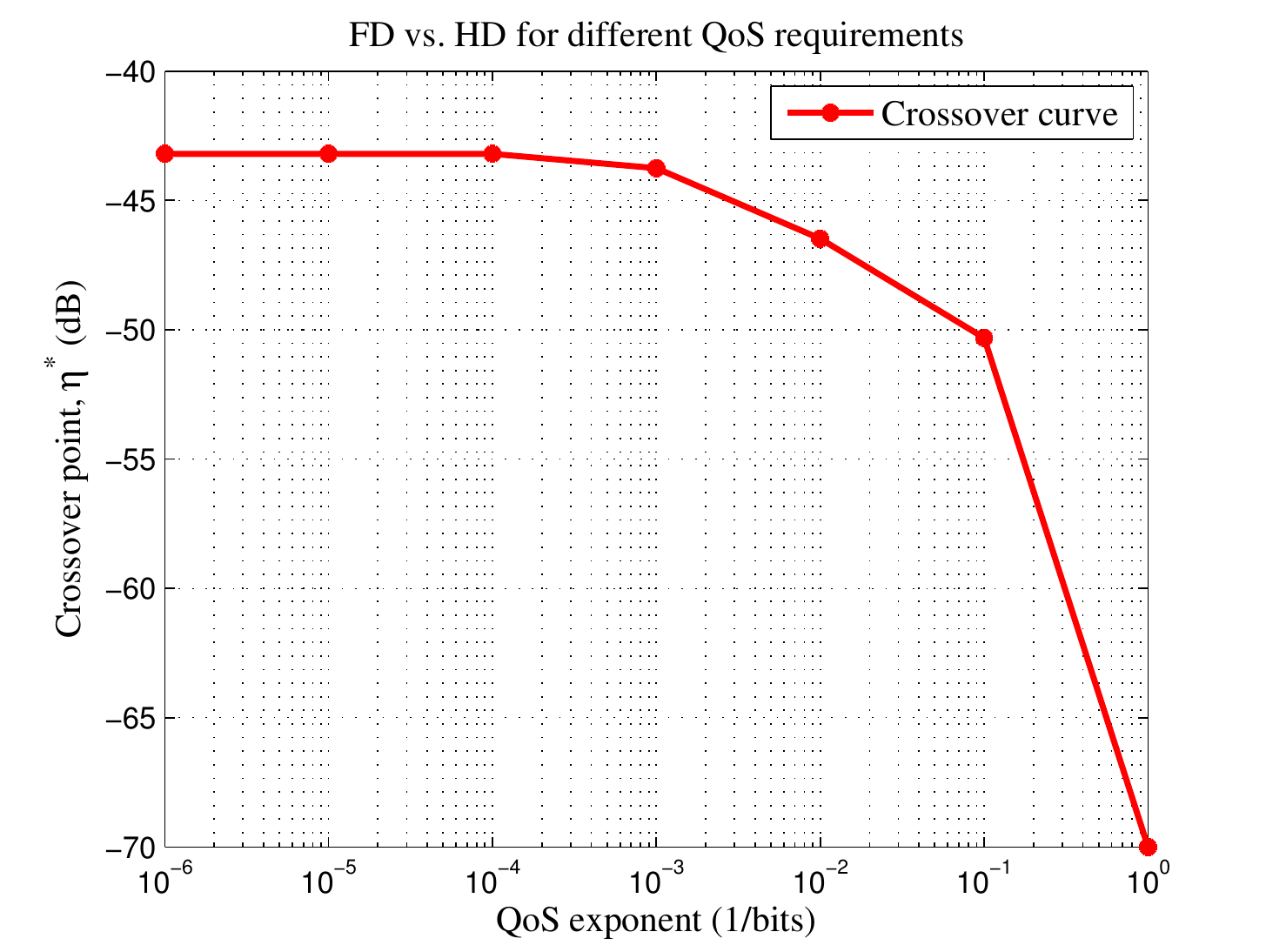}
	\caption{Crossover point, $\eta ^{\ast}$, versus QoS exponent, $\theta$, in a random realization of the HCN with $\lambda = 5$ small cells/km$^2$.}
	\label{Fig.10}
\end{figure}

Fig.~\ref{Accuracy_Complexity}.a depicts the relative error of the proposed lower bound versus the number of small cells. We considered the number of small cells from the set $\left\{1, 2, 4, 8, 16, 22, 37, 50\right\}$,  and averaged the  relative error of our scheme in 100 different realizations of the HCN deployment for each case. As shown in this figure, by increasing the number of interferes in the network, the lower bound becomes tighter.

As discussed earlier in Section \ref{sec:theoretical}, the complexity of the proposed approach for analyzing the system performance is almost independent of the size of the network and significantly smaller than that of the exact analysis. To be more concrete, an exact analysis has a complexity in the order of $\mathcal{O} (N^{M})$, where $N$ is the number of function evaluations in each dimension of the integration, and $M$ is the number of small cells in the network. In comparison, the computational complexity of the proposed lower bound is $\mathcal{O}(N+M)$, where the first term of the inner argument corresponds to integration in the desired signal's dimension and the second term corresponds to calculating the average interference. To asses this claim for the same realizations of the network considered in Fig.~\ref{Accuracy_Complexity}.a, the average time elapsed to get the results is provided in Fig. \ref{Accuracy_Complexity}.b as a metric of complexity. As the number of interferes in the network grows, the complexity of the exact analysis increases dramatically, while the proposed lower bound has complexity almost independent of the network size.

Given the complexity and accuracy issues discussed above, it can readily be claimed that the approach proposed in this paper is scalable. That is, the denser the network, the tighter the approximation, without any significant additional complexity.

\subsection{Crossover Point}
\label{FD vs. HD Tradeoff} 
The SI cancellation techniques play an essential role in enabling a wireless device to operate in FD mode; see e.g., \cite{InbandSabharwal, bharadia2013full, bharadia2014full,duarte2010full,ChoiAchiSingle,CS}. Despite the abundance of many different proposed RF, active, and passive interference cancellation techniques with a variety of performance results, establishing any FD wireless link requires carefully designed SI cancellation techniques specific to the required application. In this regards, a system level study helps gaining insight on the required level of SI cancellation and thus on finding the appropriate SI cancellation techniques. To this aim, in the following we investigate the crossover point, $\eta^{\ast}$, i.e., the minimum SI cancellation ($\eta$) required for the FD system to start outperforming the HD counterpart. This would require solving the equation:
\begin{align}
	\label{eq.29}
	\nonumber
	& EC_{HD}(\theta)=- \frac{1}{\theta }\log \left[ {{\E_{s,I}}{{\left( {1 + \textrm{SINR}{_\textrm{HD}}} \right)}^{ - \theta {T_f}\frac{{BW}}{2}{{\log }_2}e}}} \right] =  \\
	&- \frac{1}{\theta }\log \left[ {{\E_{s,I}}{{\left( {1 + \textrm{SINR}{_\textrm{FD}}} \right)}^{ - \theta {T_f}{{BW}}{{\log }_2}e}}} \right]= EC_{FD}(\theta)
\end{align}
where ${\textrm{SINR}{_{\textrm{HD}}}}$ and ${\textrm{SINR}{_{\textrm{FD}}}}$ are provided in \eqref{Dual-eq7} and \eqref{eq7}, respectively. Eq. \eqref{eq.29} is a complicated non-linear function of the parameter $\eta$, the topology of the network, the QoS constraint, $\theta$, and the transmit powers in the network, which makes finding any closed form expression for the trade off point not possible. However, an estimate of such a point can be found numerically.  

As an example{\color{blue},} we plotted this trade off point in Fig. \ref{Fig.10} for a random 
realization of the HCN. This figure indicates the relation between the crossover 
point, $\eta^{\ast}$, and the QoS exponent, $\theta$. 
{Not intuitively, this 
plot indicates that the more stringent the QoS constraint, the higher the SI 
cancellation level required in order to make FD preferable to HD. In other 
words, if the QoS constraint is higher, the impact is profound on a FD system 
compared to its HD counterpart.}

{{Finally, Figs. \ref{Fig.7} and \ref{EC_BSPOWER_100dB} provide the results for the  DL EC of a typical UE as a function of the transmit power of the small cell BSs. As these figures show, the transmit power for which the FD starts outperforming the HD counterpart is a function of the topology of the network and, more importantly, of the SI cancellation performance, $\eta, \; \kappa$. For instance, in Fig. \ref{Fig.7} this occurs at~35 dBm, while for higher SI cancellation parameters, e.g., $\eta = -100$~dB, FD would outperform HD in a wider transmit power ranges as shown in Fig. \ref{EC_BSPOWER_100dB}.}}

\begin{figure}[!t]
	\vspace{-0.25cm}
	\centering
	\includegraphics[width=0.5 \textwidth]{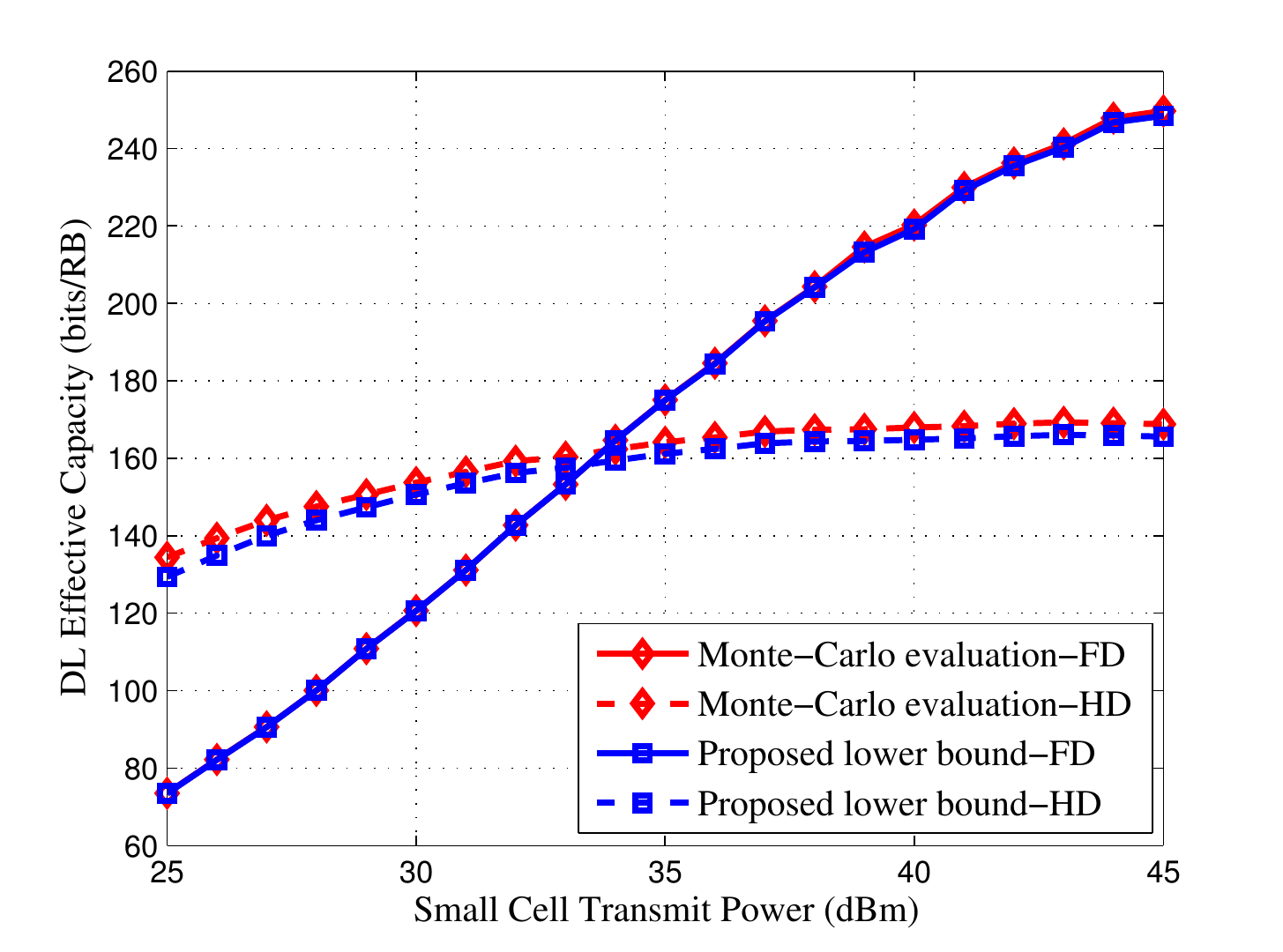}
	\caption{DL EC versus small cell transmit power for a random realization of the HCN, with $\eta= -50$ dB and $\theta = 10^{-2}$, $\lambda=30 \, 
		\text{small cells/km}{^2}$.}
	\label{Fig.7}
\end{figure}

\begin{figure}[!t]
	\vspace{-0.25cm}
	\centering
	\includegraphics[width=0.5 \textwidth]{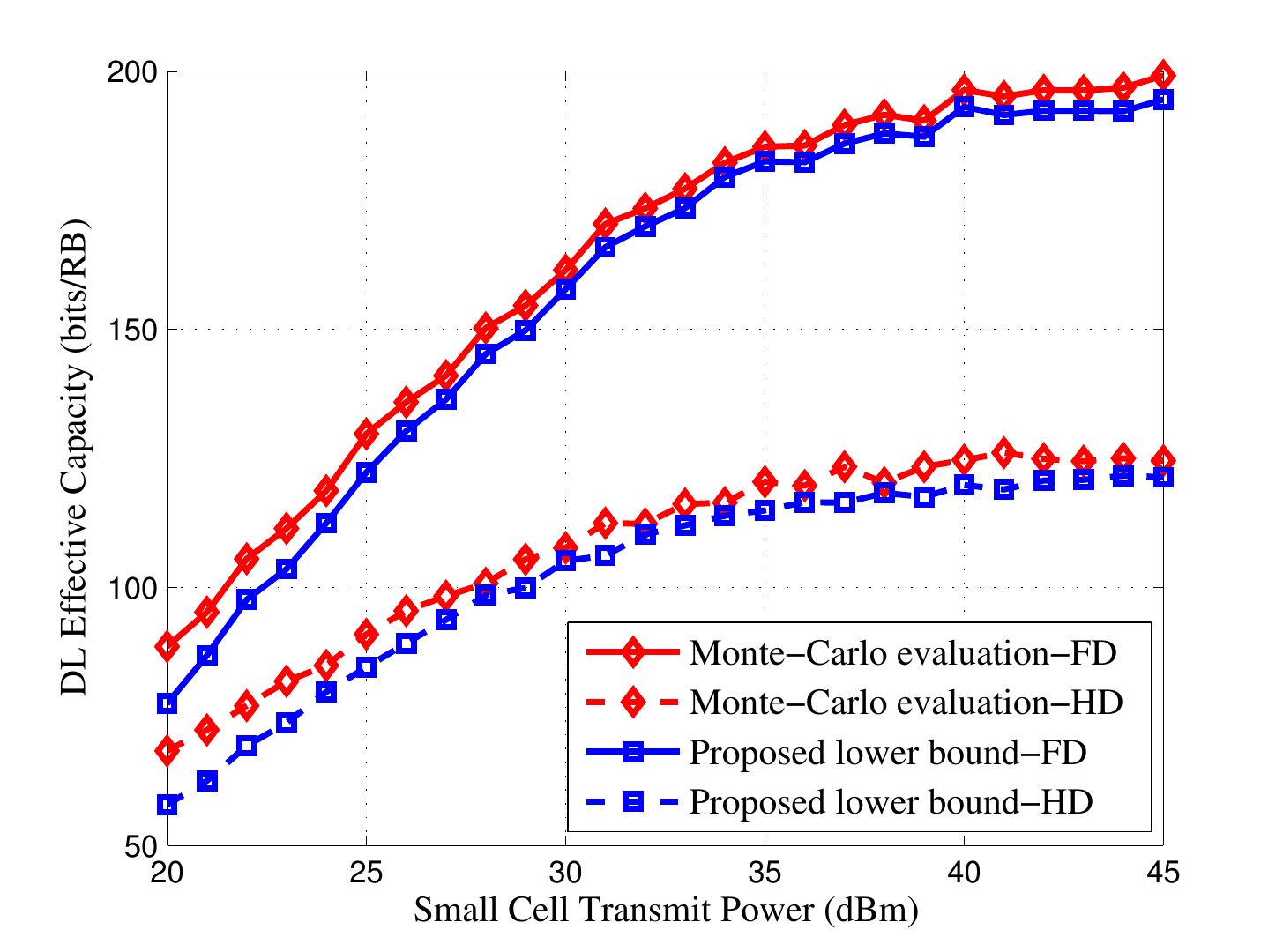}
	\caption{DL EC versus small cell transmit power for a random realization of the HC, with $\eta= -100$ dB and $\theta = 10^{-2}$, $\lambda=30 \, 
		\text{small cells/km}{^2}$.}
	\label{EC_BSPOWER_100dB}
\end{figure}

\section{Discussions}
\label{Discussions}

\subsection{ UE Distribution, Coverage Areas, and Channel Models}
	\label{UE distribution, Coverage areas}
	
	The uniform distribution of the UEs around their corresponding BSs, which is a result of the Mat\'ern HCPP model for the system, can be relaxed. This is due to the fact that the result given in (\ref{eq13}) is true for \textit{any} 
	distribution of the UEs and BSs in the network. 
	In fact, we can expect to find similar results for other point processes 
	that can be used to model the nodes. For instance, we can expect the 
	same procedure to be applied also for Thomas point processes, where child points 
	(UEs) are distributed with an isotropic Gaussian distribution with a given 
	variance around the cluster heads (BSs) \cite{StochGeoEk}. In fact, for any generic 
	distribution we only need to build a computationally efficient procedure to 
	find the average interference on the desired network entity (UE or BS), 
	as we have done in Section \ref{subsec:average_interference}.
	
	
	Finally, we assumed a pathloss dominated AWGN Rayleigh fading channel model, but, since 
	the lower bound presented in (\ref{eq13}) is always valid for any distribution 
	of the interferer's channel, it is straightforward to extend this work to other channel models, e.g., including log-normal shadowing.

\subsection{Overlapping Small Cells}
\label{Overlapping Small Cells}

{The analysis carried out in Section~\ref{sec:theoretical} relies on the 
assumption that two small cells deployed in the same macro cell do not overlap. 
However, in very dense network deployments, small cells are not perfectly 
separated and may overlap in some cases, thus impairing the effectiveness of 
the analysis proposed in this paper.}

{In detail, when two small cells overlap, as depicted in 
Fig.~\ref{Fig8}, there can be two users deployed in the overlapping regions 
which are served in the same RB, since we did not make any assumption on the 
resource scheduling strategy. In this case, the interference caused by the 
neighboring UE to the free UE (whose average value is expressed by 
Proposition~(\ref{proposition2})) is very high, since the distance between the 
two UEs is small. This is particularly relevant if 
both UEs are FD and, hence, both are transmitting and receiving at the same 
time. As a consequence, the term ${E_I}\left( I \right) = \bar I$ may become very 
high in this scenario and, consequently, the lower bound proposed in 
\eqref{eq13} is much less tight.}

{Fortunately, in a real network scenario, the possibility that two UEs 
very close to each other are served in the same RB is practically very small, and can be neglected. 
Indeed, exploiting cell coordination and channel measurements, two close UEs 
deployed in different but overlapping cells will be scheduled in different RBs. 
Consequently, the theoretical approach proposed in this paper is of 
practical interest.}

\begin{figure}[!t]
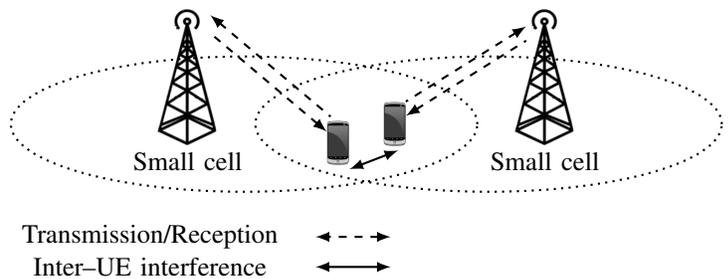

	\centering
	\begin{tikzpicture}

	\node at (-3.5,+.65) {\includegraphics[scale=0.19]{macro_cell.png}};
	\node at (-3.5,-.5) {Small cell};
	
	\node at (+1.25,+.65) {\includegraphics[scale=0.19]{macro_cell.png}};
	\node at (1.25,-.5) {Small cell};
	
	\draw[latex-,draw=black,thick,dotted] (+.5,0) ellipse (3.1cm and .9cm);
	\draw[latex-,draw=black,thick,dotted] (-2.75,0) ellipse (3.1cm and .9cm);
	\node at (-1.5,-.25) {\includegraphics[scale=0.08]{ue.png}};
	
	\node at (-.75,0) {\includegraphics[scale=0.08]{ue.png}};
	
	\draw[latex-,draw=black,thick,dashed] (-3.2,1.45) -- (-1.6,.1);	
	
	\draw[latex-,draw=black,thick,dashed] (-1.65,-.1) -- (-3.2,1.21);	
	
	\draw[latex-,draw=black,thick,dashed] (1,1.3) -- (-.6,.3);	
	
	\draw[latex-,draw=black,thick,dashed] (-.6,.11) -- (1,1.11);
	
	\draw[latex-latex,draw=black,thick] (-.7,-.35) -- (-1.35,-.6);
	
	

	\node at (-4,-1.5) {Transmission/Reception};
	\node at (-4,-1.9) {Inter--UE interference};
	
	\draw[latex-latex,draw=black,thick,dashed] (-.8,-1.5) -- (-1.8,-1.5);
	
	\draw[latex-latex,draw=black,thick] (-.8,-1.9) -- (-1.8,-1.9);
	
	\end{tikzpicture}
	\caption{{Inter--UE interference in overlapping small cell scenario 
			(full duplex).}}
	\label{Fig8}
\end{figure}

	\section{Conclusions}
	\label{sec:conclusions}
	
	In this paper we introduced a lower bound for the evaluation of EC in a  wireless cellular scenario based on which we built a scalable mathematical framework to analyze the statistical QoS
	performance of ultra dense next generation HCNs. Our proposed scalable 
	scheme helped us analyzing HD and imperfect FD HCNs from
	the EC perspective with high accuracy at only a fraction of
	the complexity needed for an exact analysis. Moreover, we investigated the trade off between FD and HD systems by finding the minimum amount of SI cancellation needed for FD HCN to perform similar to HD in terms of EC.

\appendices

\section{}
\label{AppB}
The average path loss between the free UE and the desired UE based on Fig. \ref{fig2.5} can be found by computing

\begin{eqnarray}
\E\left( {{{\left\| {x'} \right\|}^{ - \alpha }}} \right) = \E{\left( {{r_1'}^2 + {{c'}^2} - 2c'{{r_1'}}\cos \left( {{{\theta_1'}} - \psi '} \right)} \right)^{ - \frac{\alpha }{2}}}
\label{AppBeq3}
\end{eqnarray}
where
\begin{eqnarray}
{{c'}^2} = {{d'}^2} + {r'_2}^2 + 2{{r_2'}}d'\cos {{\theta'}_2} \label{AppBeq2}
\end{eqnarray}
To approximately calculate (\ref{AppBeq3}) this time we assume that $\left( {{{r_2'}},{{\theta '}_2}} \right)$, i.e., the position of the desired UE, is fixed, and then take the expectation with respect to the position of the free UE and finally with respect to the desired UE.
\begin{align}
\nonumber
&\,\,{\E_{\left( {{{r_1'}},{{\theta_1'}}} \right)}}\left[ {\left. {{{\left( {{r_1'}^2 + {{c'}^2} - 2c'{{r_1'}}\cos \left( {{{\theta_1'}} - \psi '} \right)} \right)}^{ - \frac{\alpha }{2}}}} \right|\left( {{{r_2'}},{{\theta_2'}}} \right)} \right] \\ \nonumber
&= \int\limits_0^{R_1'} {\int\limits_{{\theta ^ * }}^{2\pi  - {\theta ^ * }} {{{\left( {{r'_1}^2 + {{c'}^2} - 2c'{{r_1'}}\cos \left( {{{\theta_1'}} - \psi '} \right)} \right)}^{ - \frac{\alpha }{2}}}} } \frac{{{{r_1'}d{{r_1'}}d{{\theta_1'}}}}}{{(\pi-\theta ^ *) {{R_1 '}^2}}}\\ \nonumber
&\mathop  = \limits^{(a)} \underbrace {\int\limits_0^{c'} {\int\limits_{{\theta ^ * }}^{2\pi  - {\theta ^ * }} {{{\left( {{r_1'}^2 + {{c'}^2} - 2c'{{r_1'}}\cos \left( {{{\theta_1'}} - \psi '} \right)} \right)}^{ - \frac{\alpha }{2}}}} } \frac{{{r_1'}d{{r_1'}}d{{\theta_1'}}}}{{(\pi-\theta ^ *) {{R_1 '}^2}}}}_{{I_1}} \\ \nonumber
&+ \underbrace {\int\limits_{c'}^{R_1'} {\int\limits_{{\theta ^ * }}^{2\pi  - {\theta ^ * }} {{{\left( {{r'_1}^2 + {{c'}^2} - 2c'{{r_1}'}\cos \left( {{{\theta_1 '}} - \psi '} \right)} \right)}^{ - \frac{\alpha }{2}}}} } \frac{{{{r_1}'}d{{r_1'}}d{\theta_1 '}}}{{(\pi-\theta ^*) {R_1^2}}}}_{{I_2}} \\
\label{APPB_4}
\end{align}
where (a) holds, since $\left( {{{r_2'}},{{\theta_2'}}} \right)$ is fixed and consequently $ \left( c',\psi '\right) $ becomes constant, thus the integration can be separated into two pieces. 

By factorizing $c'$ in the first integral and ${r_1'}$ in the second, we can find an approximation for (\ref{APPB_4}) due to the fact that ${{r_1'}} < c'$ holds always in the former while $c'< {{r_1'}}$ in the latter.
\begin{align}
\nonumber
& I_{1}=\frac{{{{c'}^{ - \alpha }}}}{{{{R_1'}^2}{(\pi-\theta ^ *) }}} \times \\ \nonumber
&\,\,\,\,\,\,\,\int\limits_0^{c'} {\int\limits_{{\theta ^ * }}^{2\pi  - {\theta ^ * }} {{{\left( {1 + {{\left( {\frac{{{{r_1'}}}}{{c'}}} \right)}^2} - 2\frac{{{{r_1'}}}}{{c'}}\cos \left( {{{\theta_1'}} - \psi '} \right)} \right)}^{ - \frac{\alpha }{2}}}} } {{r_1'}}d{{r_1'}}d{{\theta _1'}} \\ \nonumber
&\approx \frac{{{{c'}^{ - (\alpha  - 2)}}}}{{{{R_1'}^2}}}\left\{ {1 - \left[ {\frac{\alpha}{4} + \frac{2\alpha}{{3\left( {\pi  - {\theta ^*}} \right)}}\sin {\theta ^*}\cos \psi '} \right]} + \frac{{\alpha \left( {\alpha  + 2} \right)}}{8} \right. \\ 
&\left. \times { \left[ {\frac{4}{3} + \frac{1}{{\left( {\pi  - {\theta ^*}} \right)}}\left( {\frac{8}{5}\sin {\theta ^*}\cos \psi ' - \frac{1}{2}\sin 2{\theta ^*}\cos 2\psi '} \right)} \right]} \right\}
\label{AppBeq5}
\end{align}
In the second integral, again by using the same  approach as in (\ref{AppBeq5}), we obtain
\begin{align}
\nonumber
& I_2= \frac{1}{{{R_1'}^2}{{(\pi-\theta ^ *) }}} \times \\ \nonumber
& \,\,\,\,\int\limits_{c'}^{R_1'} {\int\limits_{{\theta ^ * }}^{2\pi  - {\theta ^ * }} {{{r_1'}}^{\,\left( { - \alpha  + 1} \right)}{{\left( {1 + {{\left( {\frac{{c'}}{{{{r_1'}}}}} \right)}^2} - 2\frac{{c'}}{{{{r_1'}}}}\cos \left( {{{\theta_1 '}} - \psi '} \right)} \right)}^{ - \frac{\alpha }{2}}}} } \\ \nonumber
& \times d{{r_1'}}d{{\theta_1'}} \\ \nonumber
& \approx \frac{1}{{{{R_1'}^2}}}\left\{ { {\frac{2}{{\alpha  - 2}}} \left({{c'}^{ - \left( {\alpha  - 2} \right)}- {{{R_1'}^{ - \left( {\alpha  - 2} \right)}}}}\right) - \frac{\alpha }{2}\left[ {\frac{{2}}{\alpha}}{{c'}^2} \left( {{{c'}}^{ - \alpha }} \right. \right.} \right.\\ \nonumber
& \left. { - {{R'}_1}^{ - \alpha }} \right) \left. {+ \frac{{4c'}}{{(\alpha  - 1)}(\pi -\theta ^ *)}\left( {{{c'}^{ - \alpha  + 1}}-{{R_1'}^{ - \alpha  + 1}}} \right)\sin {\theta ^ * }\cos \psi '} \right]\\ \nonumber
& + \frac{{\alpha \left( {\alpha  + 2} \right)}}{8}\left[ {\frac{{2{{c'}^4}}}{{\alpha  + 2}}\left( {{{c'}^{ - \left( {\alpha  + 2} \right)}} - {{R_1'}^{ - \left( {\alpha  + 2} \right)}}} \right)} +\frac{{4{{c'}^2}}}{\alpha } \times \right.\\ \nonumber
&\,\, \left( {{{c'}^{ - \alpha }} - {{R_1'}^{ - \alpha }}} \right) - \frac{{2{{c'}^2}}}{{\alpha (\pi  - {\theta ^ * })}}\left( {{{c'}^{ - \alpha }} - {{R_1'}^{ - \alpha }}} \right)\sin 2{\theta ^ * }\cos 2\psi ' \\ 
&\left. {\left. {+ \frac{{8{{c'}^3}}}{{\left( {\alpha  + 1} \right)(\pi  - {\theta ^ * })}}\left( {{{c'}^{ - \left( {\alpha  + 1} \right)}} - {{R_1'}^{ - \left( {\alpha  + 1} \right)}}} \right)\sin {\theta ^*}\cos \psi '} \right]} \right\}
\label{AppBeq6}
\end{align}
We can simplify (\ref{AppBeq6}) by recalling that ${R_1'}$ is the radius of the coverage area of the macro BS, thus we have $ {{R_1'}^{ - t }} \ll {{c'}^{ - t }}, \forall t>0$. By taking this into account and using (\ref{AppBeq6}) and (\ref{AppBeq5}) we obtain the approximation for the average path loss from a free UE to the desired UE as
	\begin{align}
		\nonumber
		& {I_{{\text{free UE - UE}}}} \approx {P_{{\rm{UE}}}}. \\ \nonumber
		&\left[ {\left( {\frac{{\left( {\alpha  + 2} \right)\left( {\alpha  + 3} \right)}}{6} + \frac{2}{{\alpha  - 2}}} \right)} \right.\frac{1}{{{{R'_1}^2}}}\E\left( {{{c'}^{ - (\alpha  - 2)}}} \right) \\ \nonumber
		&- \frac{{\alpha \left( 3 \alpha^3 +11 \alpha^2-18 \alpha -56 \right)}}{{15(\pi  - {\theta ^*})\left( {{\alpha ^2} - 1} \right)}}\frac{{\sin {\theta ^*}}}{{{{R'_1}^2}}}\E\left( {{{c'}^{ - (\alpha  - 2)}}\cos \psi '} \right) \\ 
		&\left. { - \frac{{\left( {\alpha  + 2} \right)\left( {\alpha  + 4} \right)}}{{16(\pi  - {\theta ^*})}}\frac{{\sin 2{\theta ^*}}}{{{{R'_1}^2}}}\E\left( {{{c'}^{ - (\alpha  - 2)}}\cos 2\psi '} \right)} \right].
		\label{AppBeq7}
	\end{align}
where $\E\left( {{{c'}^{ - (\alpha  - 2)}}} \right)$ is already provided in (\ref{eq17}) and $\E\left( {{{c'}^{ - (\alpha  - 2)}}\cos \psi '} \right)$ and $\E\left( {{{c'}^{ - (\alpha  - 2)}}\cos 2\psi '} \right)$ can be calculated as follows 

\begin{align}
\nonumber
& \E\left( {{{c'}^{ - \left( {\alpha  - 2} \right)}}\cos \psi '} \right) = \\ \nonumber 
& d'  \E\left( {{{c'}^{ - \left( {\alpha  - 1} \right)}}} \right) + \E\left( {{{c'}^{ - \left( {\alpha  - 1} \right)}}{{r'}_2}\cos {{\theta '}_2}} \right) \approx \\ 
& d' \E\left( {{{c'}^{ - \left( {\alpha  - 1} \right)}}} \right) - \frac{{\left( {\alpha  - 1} \right){{d'}^{ - \alpha }}R'{{_2^2}}}}{4} + \frac{{\left( {{\alpha ^2} - 1} \right){{d'}^{ - \left( {\alpha  + 2} \right)}}R'{{_2^4} }}}{{12}}
\label{Ec(cospsi)}
\end{align}
and
\begin{align}
\nonumber
& \E\left( {{{c'}^{ - \left( {\alpha  - 2} \right)}}\cos 2\psi '} \right) = \\ \nonumber
& \E\left( {{{c'}^{ - \left( {\alpha  - 2} \right)}}\left( {1 - 2{{\left( {\frac{{{{r'}_2}\sin {{\theta '}_2}}}{{c'}}} \right)}^2}} \right)} \right) \approx \\ \nonumber
&  \E\left( {{{c'}^{ - \left( {\alpha  - 2} \right)}}} \right) - \left\{ {\frac{{{{d'}^{ - \alpha }}R'{{_2^2}}}}{2} + \alpha \left( {4\alpha  + 7} \right)\frac{{{{d'}^{ - \left( {\alpha  + 2} \right)}}R'{{_2^4}}}}{6}} \right.\\
& \left. {\,\,\,\,\, + \alpha \left( {\alpha  + 2} \right)\frac{{{{d'}^{ - \left( {\alpha  + 4} \right)}}R'{{_2^6} }}}{{32}}} \right\}
\label{Ec(cos2psi)}
\end{align}
In deriving (\ref{Ec(cospsi)}) and (\ref{Ec(cos2psi)}) we assumed $R'_2<d'$.




%
\bibliographystyle{IEEEtran}
\bibliography{References}

%

%

\end{document}